\newcommand{\eps}{\varepsilon}
\newcommand{\tp}{\intercal}
\newcommand{\N}{\mathbb{N}}
\newcommand{\Z}{\mathbb{Z}}
\newcommand{\R}{\mathbb{R}}
\newcommand{\Rnn}{\mathbb{R}_{\geq 0}}
\newcommand{\set}[1]{\left\{#1\right\}}
\newcommand{\setc}[2]{\left\{\left.#1\ \right|\ #2\right\}}
\newcommand{\ceil}[1]{\left\lceil#1\right\rceil}
\newcommand{\eg}{e.g.,\xspace}
\newcommand{\etal}{{et\,al.}\xspace}
\newcommand{\ie}{i.e.,\xspace}
\newcommand{\st}{\text{s.t.}\xspace}
\newcommand{\DeclareMathProblem}[2]{\newcommand{#1}{\ensuremath{\mathsf{#2}}\xspace}}
\DeclareMathProblem{\EC}{EC}
\DeclareMathProblem{\LP}{LP}
\DeclareMathProblem{\IP}{IP}
\DeclareMathProblem{\OPT}{OPT}
\DeclareMathProblem{\SDP}{SDP}
\DeclareMathProblem{\TAP}{TAP}
\DeclareMathProblem{\WTAP}{WTAP}
\newcommand{\DeclareCC}[2]{\newcommand{#1}{\ensuremath{\mathsf{#2}}\xspace}}
\DeclareCC{\APX}{APX}
\DeclareCC{\FPTAS}{FPTAS}
\DeclareCC{\NP}{NP}
\DeclareCC{\PT}{P}
\DeclareCC{\PTAS}{PTAS}
\newenvironment{lp}[2]{
 \begin{array}{ll@{\ \ }c@{\ \ }l@{\quad}l}
  \displaystyle \textrm{#1}  & \displaystyle #2, &      &     & \\[5mm]
  \displaystyle \textrm{\st}
}{   
 \end{array}
}
\newcommand{\lpconstraint}[4]{& \displaystyle #1 & \displaystyle #2 & \displaystyle #3 & \displaystyle \text{for all } #4,\\[2mm]}
\newcommand{\finallpconstraint}[4]{& \displaystyle #1 & \displaystyle #2 & \displaystyle #3 & \displaystyle \text{for all } #4.\\}
\DeclareMathOperator{\supp}{supp}
\newtheorem{theorem}{Theorem}
\newtheorem{claim}[theorem]{Claim}
\newtheorem{lemma}[theorem]{Lemma}
\newtheorem{observation}[theorem]{Observation}
\newcommand{\introduceproblem}[3]{
 ~\\
 \framebox{
  \parbox{0.969\textwidth}{
   \vspace{1mm}
   {\large\textsf{#1}}\\[2mm]
   \begin{tabular}{lp{0.8\textwidth}}
   \emph{Input}:  & #2\tabularnewline[1mm]
   \emph{Output}: & #3
   \end{tabular}
  }
  \\
 }
\vspace{4mm}
}
\DeclareMathOperator{\cov}{cov}
\DeclareMathOperator{\lca}{lca}
\DeclareMathOperator{\leaves}{Leaves}
\DeclareMathOperator{\outdeg}{outdeg}
\newcommand{\zhcgcuts}{$\set{0,\tfrac{1}{2}}$-Chv\'atal-Gomory cuts\xspace}
\newcommand{\xcr}{\ensuremath{x^{\mathsf{cr}}}\xspace}
\newcommand{\xin}{\ensuremath{x^{\mathsf{in}}}\xspace}
\newcommand{\Lup}{\ensuremath{L^{\mathsf{up}}}\xspace}
\newcommand{\Lcr}{\ensuremath{L^{\mathsf{cr}}}\xspace}
\newcommand{\Lin}{\ensuremath{L^{\mathsf{in}}}\xspace}
\newcommand{\oddcutlp}{odd-cut \ensuremath{\mathsf{LP}}\xspace}
\newcommand{\oflp}{odd-cut bundle \ensuremath{\mathsf{LP}}\xspace}
\newcommand{\oflps}{odd-cut bundle \ensuremath{\mathsf{LP}} \ensuremath{(\LP^{odd}_\gamma)}\xspace}
\tikzstyle{node}=[draw,circle,fill=black,scale=0.5]
\tikzstyle{root}=[node, fill=white]
\tikzstyle{edge}=[very thick]
\tikzstyle{cut}=[blue!80]
\tikzstyle{link}=[-, very thick, dashed, blue!80]
\tikzstyle{inlink}=[-, very thick, dashed, blue!80]
\tikzstyle{crosslink}=[-, very thick, dotted, green!75!black!60]
\tikzstyle{uplink}=[-, very thick, dash dot, orange!80]
\tikzstyle{marked}=[ultra thick, red]
\title{A $\frac{3}{2}$-Approximation Algorithm for Tree Augmentation via Chv\'atal-Gomory Cuts}
\author{Samuel Fiorini~\thanks{Universit\'e libre de Bruxelles, Brussels, Belgium. eMail: \texttt{sfiorini@ulb.ac.be}} \and Martin Gro\ss{} \and Jochen K\"onemann \and Laura Sanit\`a~\thanks{University of Waterloo, Waterloo, ON, Canada. eMail: \texttt{\{mgrob,jochen,laura.sanita\}@uwaterloo.ca}}}
\date{}
\begin{document}

\maketitle

\begin{abstract}
The weighted tree augmentation problem (\WTAP) is a fundamental network design problem. We are given an undirected tree $G = (V,E)$, an additional set of edges $L$ called \emph{links} and a cost vector $c \in \R^L_{\geq 1}$. The goal is to choose a minimum cost subset $S \subseteq L$ such that $G = (V, E \cup S)$ is $2$-edge-connected. In the unweighted case, that is, when we have $c_\ell = 1$ for all $\ell \in L$, the problem is called the tree augmentation problem (\TAP).
 
Both problems are known to be \APX-hard, and the best known approximation factors are $2$ for \WTAP by (Frederickson and J\'aJ\'a, '81) and $\tfrac{3}{2}$ for \TAP due to (Kortsarz and Nutov, TALG '16). In the case where all link costs are bounded by a constant $M$, (Adjiashvili, SODA~'17) recently gave a $\approx 1.96418+\eps$-approximation algorithm for \WTAP under this assumption. This is the first approximation with a better guarantee than $2$ that does not require restrictions on the structure of the tree or the links. 

In this paper, we improve Adjiashvili's approximation to a $\tfrac{3}{2}+\eps$-approximation for \WTAP under the bounded cost assumption. We achieve this by introducing a strong \LP that combines \zhcgcuts for the standard \LP for the problem with bundle constraints from Adjiashvili. We show that our \LP can be solved efficiently and that it is exact for some instances that arise at the core of Adjiashvili's approach. This results in the improved guarantee of $\tfrac{3}{2}+\eps$. For \TAP, this is the best known \LP-based result, and matches the bound of $\tfrac{3}{2}+\eps$ achieved by the best \SDP-based algorithm due to (Cheriyan and Gao, arXiv '15). 
\end{abstract}
\section{Introduction}

The tree augmentation problem (weighted or unweighted) is 
a fundamental and intensively studied problem in the area 
of network design, see for example the surveys by Khuller~%
\cite{Khuller97} and Kortsarz and Nutov~\cite{KortsarzNutov10}. 
While already in the
unweighted case the problem is known to be \APX-hard, the best
algorithms for \WTAP and \TAP achieve approximation factors of 
$2$ and $\tfrac{3}{2}$ respectively. 
One of the main open questions about these problems is to improve 
the quality of approximation algorithms.

Adjiashvili~\cite{Adjiashvili17} recently managed to push the
approximation guarantee for \WTAP below $2$ in case the link 
costs are bounded by a constant $M$, which was the first improvement 
in over 35 years that did not restrict the structure of the tree 
or the links. His algorithm is based on an \LP that strengthens the
standard \LP for the problem. Letting $\cov(e)$ denote the set 
of links connecting distinct connected components of $G - e$, for 
each tree edge $e \in E$, the standard \LP for \WTAP is
\begin{alignat}{4}
 & \text{min}    &       & \sum_{\ell \in L}           c_\ell &x_\ell &       & \label{eq:cut1}\\
 & \text{s.\,t.} & \quad & \sum_{\mathclap{\ell \in \cov(e)}} &x_\ell &\geq 1 & \qquad &\text{for all } e\in E,\label{eq:cut2}\\
 &               &       &                                    &x_\ell &\geq 0 & \qquad &\text{for all } \ell \in L.\label{eq:cut3}
\end{alignat}
This \LP is known as the \emph{cut \LP}. 

\paragraph*{Our results.}

We add to the cut \LP all its \zhcgcuts, thus obtaining a new \LP 
for \WTAP that we call the \oddcutlp. The \oddcutlp is key to our
approach. It has three extremely useful properties:

\begin{itemize}
\item One can solve the \oddcutlp efficiently, even though 
separating \zhcgcuts is \NP-hard in general~%
\cite{CapraraFischetti96}.  

\item The \oddcutlp is compatible with the decomposition approach of~\cite{Adjiashvili17} to split the given 
instance and \LP solution into well-structured independent
instances together with their own local \LP solutions.

\item The \oddcutlp is exact if for a certain choice of 
root $r \in V[G]$, every link $\ell$ connects either two 
different connected components of $G - r$ (in which case 
$\ell$ is called a cross-link) or some node of $G$ to one 
of its ancestors (in which case $\ell$ is called an up-link).
\end{itemize}

We prove the last property by establishing that the
constraint matrix of the cut \LP is an integral binet matrix. 
These matrices were introduced by Appa and Kotnyek~%
\cite{AppaKotnyek04} as a generalization of network matrices.
Relying on earlier work by Edmonds and Johnson~\cite{EdmondsJohnson73},
Appa~\etal~\cite{AppaEtAl07} proved that the integer hull of 
polyhedra of the form~$\setc{x}{Ax \geq b,\ x \geq 0}$ can be 
described by \zhcgcuts whenever $A$ is an 
integral binet matrix and $b$ is an integer vector. This 
results in the following theorem.

\begin{restatable}{theorem}{exact}
\label{theorem:exact}
 The odd-cut \LP is integral for \WTAP instances that contain only cross- and up-links.
\end{restatable}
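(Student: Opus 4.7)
The plan is to reduce Theorem~\ref{theorem:exact} to the theorem of Appa~\etal cited in the excerpt: any polyhedron of the form $\setc{x}{Ax \geq b,\ x \geq 0}$ with $A$ an integral binet matrix and $b$ an integer vector has integer hull described by its \zhcgcuts. Since the \oddcutlp is by definition the cut \LP augmented with exactly these cuts, it suffices to exhibit the constraint matrix $M$ of \eqref{eq:cut1}--\eqref{eq:cut3}, restricted to instances whose links are all cross- or up-links, as an integral binet matrix.

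First I would fix the root $r$ and recall that $M$ has a row per tree edge and a column per link, with $M_{e\ell}=1$ iff $\ell \in \cov(e)$, \ie iff the tree path between the endpoints of $\ell$ contains $e$. I would then construct a bidirected (equivalently, signed) graph $H$ by taking $G$ itself, declaring every tree edge a positive edge oriented away from $r$, and adjoining a single \emph{negative} element at $r$ --- a negative loop in the signed-graph formalism. The tree edges together with this loop form a valid basis $B$ for the binet construction, since the connected component containing $r$ now carries exactly one unbalanced cycle.

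The core of the argument is to insert each link into $H$ so that its column in the binet matrix determined by $B$ reproduces the corresponding column of $M$. For an up-link $\ell=(u,v)$ with $v$ an ancestor of $u$, I would add $\ell$ as a positive edge oriented from $u$ to $v$; its fundamental cycle in $B$ is the balanced tree path from $u$ to $v$, giving $+1$ entries on precisely the tree edges covered by $\ell$. For a cross-link $\ell=(u,v)$ with $u,v$ in different subtrees of $r$, I would instead add $\ell$ as a negative edge, so that its fundamental structure in $B$ is forced to traverse the negative loop at $r$. This aligns the signs contributed by the two halves of the $u$-to-$v$ path, which would otherwise come out opposite in a plain network matrix, again yielding $+1$ on exactly the covered tree edges. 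Verifying, against the formal definition of the binet matrix, that this construction really produces entries in $\{0,1\}$ on the cross-link columns (and not the a priori possible half-integer values) is the main technical obstacle and is precisely the place where binet matrices buy us something strictly beyond network matrices.

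Once $M$ has been identified as an integral binet matrix, the Appa~\etal theorem applied to $\setc{x}{Mx \geq \mathbf{1},\, x \geq 0}$ yields that the integer hull of this polyhedron is obtained by adding all \zhcgcuts, and by construction this integer hull is the \oddcutlp polyhedron, proving Theorem~\ref{theorem:exact}.
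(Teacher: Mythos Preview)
Your plan is essentially the paper's: show that the cut-\LP constraint matrix is an integral binet matrix and then invoke Appa~\etal (Theorem~\ref{theorem:binet_zhcgcuts}). The paper's realization of the bidirected graph is, however, simpler than yours and dissolves the ``main technical obstacle'' you flag. Instead of keeping all of $V[G]$ and adding a negative loop at $r$, the paper drops the row indexed by $r$ altogether: it works with truncated incidence vectors in $\R^{V[G]\setminus\{r\}}$, takes the basis $R$ to be exactly the columns of the tree edges (directed away from $r$), and encodes each cross-link $uv$ by the column $e_u+e_v$. With $r$ removed this column has $\sum_a |S_{a\ell}|\le 2$, so $(S\ R)$ is a bidirected incidence matrix with $R$ a basis. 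The claim $A=R^{-1}S$ then reduces to the one-line telescoping identity $RA=S$, and since $R$ is the (unimodular) incidence matrix of a directed tree, integrality of $A$ is automatic.

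Two concrete advantages over your construction: (i) the binet matrix has exactly $|E[G]|$ rows, matching the cut \LP, whereas your basis with the extra loop yields a matrix with one surplus row that you would still have to delete (and argue remains binet); (ii) there is no possibility of half-integer entries to worry about, because the basis is an ordinary network-matrix basis. Your signed-graph-with-loop picture is not wrong---the fundamental circuit of a negative cross-link through the negative loop is balanced, so the entries do come out in $\{0,\pm1\}$---but the paper's route makes this a non-issue.
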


Although the \oddcutlp alone might be sufficient to obtain a 
$\tfrac{3}{2}$-approximation for \WTAP (or maybe even better approximation), we combine the odd-cut \LP 
with the bundle constraints from~\cite{Adjiashvili17}, resulting in 
the \oflp. This last \LP is the one that we use in our algorithm.

We follow the decomposition approach of~\cite{Adjiashvili17}. 
After splitting the given instance and its optimum \LP solution 
into independent rooted instances and corresponding \LP solutions
(respectively), at an extra cost of $\varepsilon \OPT$, 
Adjiashvili applies to each one of the local instances 
two distinct procedures producing feasible solutions, whose 
cost is bounded in terms of the local \LP solution.

One of the two procedures of~\cite{Adjiashvili17} produces an 
integer solution of cost at most $c^\intercal x^\mathrm{in} + 
2 c^\intercal x^\mathrm{cr} + \varepsilon \OPT$, where 
$c^\intercal x^\mathrm{in}$ is the local \LP cost on in-links
(defined as all the links that are not cross-links) and 
$c^\intercal x^\mathrm{cr}$ is the local \LP cost on cross-links. 
This is the part of the analysis where bundle constraints 
are used. We keep this procedure as is in our algorithm.

Using Theorem~\ref{theorem:exact}, we improve the other 
procedure of~\cite{Adjiashvili17} to obtain an integer 
solution of cost at most $2 c^\intercal x^\mathrm{in} + 
c^\intercal x^\mathrm{cr}$. This gives a significant 
improvement in the approximation factor since combining 
both procedures, we can construct an integer solution 
in each of the local instances of cost at most
\begin{align*}
&\ \min\left\{c^\intercal x^\mathrm{in} + 
2 c^\intercal x^\mathrm{cr} + \delta,
2 c^\intercal x^\mathrm{in} + 
c^\intercal x^\mathrm{cr}\right\}\\
\leq &\
\frac12 \left(c^\intercal x^\mathrm{in} + 
2 c^\intercal x^\mathrm{cr} + \delta\right)
+ \frac12 \left(2 c^\intercal x^\mathrm{in} + 
c^\intercal x^\mathrm{cr}\right)\\
\leq&\ \frac32 \left(c^\intercal x^\mathrm{in} + 
c^\intercal x^\mathrm{cr}\right) + \delta,
\end{align*}
where $\delta$ is a small quantity whose sum across the
local instances is at most $\varepsilon \OPT$. This yields 
our main result.

\begin{restatable}{theorem}{OnePointHalf}
 \label{thm:onepointhalf}
 For every fixed $\eps > 0$ and $M \in \R_{\geq 1}$, 
 there exists an \LP-based polynomial time $\tfrac{3}{2}+\eps$-approximation algorithm for \WTAP with link costs in $[1,M]$.
\end{restatable}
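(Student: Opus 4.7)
The plan is to combine Adjiashvili's decomposition framework with two complementary rounding procedures and take the cheaper of the two in each local instance. First I would solve the \oflp{} in polynomial time; this is one of the three useful properties promised for this \LP{} and relies on a problem-specific separation argument, since $\set{0,\tfrac{1}{2}}$-Chv\'atal-Gomory separation is \NP-hard in general. Let $x^*$ be the resulting optimum fractional solution; then $c^\tp x^* \leq \OPT$.

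Next I would invoke Adjiashvili's decomposition on $(G,L,c)$ together with $x^*$, splitting the input into a family of rooted sub-instances with associated local fractional solutions. Two properties are required here: the total additive cost incurred by decomposing is at most $\eps\OPT$ (the $M$-dependence enters at this point, since the number of recursion levels depends on $\eps$ and $M$), and each local sub-instance contains only cross-links and up-links relative to its root. I would further verify that each local fractional solution is feasible not merely for the local cut \LP{}, but for the stronger local \oddcutlps{}; this compatibility is the second of the three useful properties listed for the \oddcutlp{}.

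For each local sub-instance, with the local solution written as $(\xin, \xcr)$, I would run two procedures in parallel. Procedure A is Adjiashvili's bundle-based rounding, which exploits the bundle constraints inherited from the \oflp{} to build an integer solution of cost at most $c^\tp \xin + 2\, c^\tp \xcr + \delta$, where $\sum \delta \leq \eps\OPT$ summed over all local instances. Procedure B is new and uses Theorem~\ref{theorem:exact}: since the local instance contains only cross- and up-links, the scaled vector $(2\xin,\xcr)$ is still feasible for the local odd-cut \LP{} (scaling a component upward cannot violate any $\geq$-constraint), and by the integrality of that \LP{} there exists an integer solution of cost at most $2\, c^\tp \xin + c^\tp \xcr$, obtainable in polynomial time by optimizing the local odd-cut \LP{} over $(2\xin,\xcr)$'s support (its integer hull coincides with the \LP{} polytope). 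We output the cheaper of the two integer solutions.

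By the convex-combination bound already carried out in the excerpt, the minimum of the two costs is at most $\tfrac{3}{2}(c^\tp \xin + c^\tp \xcr) + \delta$ in each local instance. Summing over all local instances gives a total integer cost of at most $\tfrac{3}{2}\, c^\tp x^* + \eps\OPT \leq (\tfrac{3}{2}+\eps)\OPT$, yielding the claimed approximation guarantee. The two main obstacles I anticipate are (i) establishing efficient solvability of the \oflp{}, which requires showing that the specific $\set{0,\tfrac{1}{2}}$-CG cuts encoded in the \oddcutlp{} can be separated in polynomial time despite the general hardness result of Caprara and Fischetti~\cite{CapraraFischetti96}, and (ii) ensuring that Adjiashvili's decomposition is truly compatible with these additional cuts, so that the local fractional solutions satisfy the local \oddcutlps{} (and not only the local cut \LP{}) — which is precisely what permits Theorem~\ref{theorem:exact} to be applied to $(2\xin,\xcr)$ in Procedure B.
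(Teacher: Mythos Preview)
Your overall architecture matches the paper's: solve the \oflp, decompose \`a la Adjiashvili, run two rounding procedures on each local piece and take the better one, then combine via the convex bound. The gap is in your Procedure~B.

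You assert that after decomposition ``each local sub-instance contains only cross-links and up-links relative to its root.'' This is false. Lemma~\ref{lemma:decomposition} only guarantees that each $(G^i,x^i)$ is $\tfrac{10M}{\eps^2}$-simple; a $\beta$-simple instance rooted at its $\beta$-center $r$ can perfectly well contain in-links $\ell = uv$ with $\lca(uv)\notin\{u,v,r\}$, i.e.\ in-links that are \emph{not} up-links. Hence Theorem~\ref{theorem:exact} does not apply to the local instance as is, and your vector $(2\xin,\xcr)$ does not help: its support still contains those non-up-link in-links, so the odd-cut \LP restricted to that support has no reason to be integral. Your feasibility argument (``scaling a component upward cannot violate a $\geq$-constraint'') is correct but irrelevant---feasibility was never in doubt; what is needed is that the \emph{support} lie in the class of links (cross $\cup$ up) for which Theorem~\ref{theorem:exact} gives integrality.

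The paper's Lemma~\ref{lemma:rounding} supplies the missing step: for every in-link $\ell=uv$ that is not an up-link, shift its mass $x_\ell$ onto the two up-link shadows $u\,\lca(uv)$ and $\lca(uv)\,v$ (shadow-completeness ensures these exist, with cost at most $c_\ell$ each). This produces a vector $y$ whose support contains only cross- and up-links, with $c^\tp y \le 2c^\tp\xin + c^\tp\xcr$, and which is still feasible for the odd-cut \LP because coverage of every tree edge is unchanged. \emph{Now} Theorem~\ref{theorem:exact} applies to the instance restricted to $\supp(y)$, yielding an integral solution of the claimed cost. The factor of~$2$ on $c^\tp\xin$ thus comes from the link-splitting, not from scaling. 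Once you insert this conversion step, your Procedure~B becomes exactly the paper's and the rest of your argument goes through.
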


This result is the best known for \WTAP with link costs bounded 
by a constant, and a significant improvement over the previously 
known $\approx 1.96418+\eps$-approximation~\cite{Adjiashvili17}. 
For \TAP, it is the best \LP-based result and matches the results 
of the \SDP-based algorithm of~\cite{CheriyanGao15}, while only 
being the $\eps$-term worse than the best overall algorithm~%
\cite{KortsarzNutov16}. Finally, we point out that while the
approximation factors are improving, the proofs are actually 
getting simpler, which we see as another indication of the 
power behind our approach.
 
\paragraph*{Related work.} 
Frederickson and J\'aJ\'a~\cite{FredericksonJaja81} showed that \WTAP is \NP-hard even if the tree has constant diameter and the link costs are either 1 or 2. For \TAP, Cheriyan \etal~\cite{CheriyanEtAl99} proved that the problem is \NP-hard even if the links form a cycle on the leaves of the tree. Kortsarz, Krauthgamer and Lee~\cite{KortsarzKrauthgamerLee04} then showed that even \TAP is \APX hard, meaning that these problems have no \PTAS, unless $\PT = \NP$.

For \WTAP, there are several 2-approximations known, the first given by Frederickson and J\'aj\'a~\cite{FredericksonJaja81} and then simplified by Khuller and Thurimella~\cite{KhullerThurimella93}. The primal-dual approach of Goemans \etal~\cite{GoemansEtAl94} and the iterative rounding algorithm of Jain~\cite{Jain01} also give a 2-approximation for this problem. Until recently, the factor of 2 was best known for general trees. Adjiashvili~\cite{Adjiashvili17} then managed to give a $\delta+\eps$-approximation for the case where all link costs are bound by a constant, for any small $\eps > 0$ and $\delta = \tfrac{8(23+3\sqrt{5})}{121} \approx 1.96418$.

For \TAP, the best known approximation is a $\tfrac{3}{2}$-approximation algorithm by Kortsarz and Nutov~\cite{KortsarzNutov16}, which is purely combinatorial. There is also a $\tfrac{3}{2}+\eps$-approximation for any $\eps > 0$, given by Cheriyan and Gao~\cite{CheriyanGao15}, which is also combinatorial but whose analysis is based on an \SDP relaxation, showing an integrality gap of $\tfrac{3}{2}+\eps$ for the \SDP. As far as \LP-based algorithms are concerned, the state of the art was a $\tfrac{7}{4}$-approximation due to Kortsarz and Nutov~\cite{KortsarzNutov16B} until Adjiashvili~\cite{Adjiashvili17} gave a $\tfrac{5}{3}+\eps$-approximation based on the bundle \LP, for any small $\eps > 0$.

For special classes of trees, there are better approximation guarantees known. For example, Cohen and Nutov~\cite{CohenNutov13} described a $1+\ln 2 \approx 1.6931$-approximation for \WTAP for trees with constant radius.
For the special case of \TAP where every link ends in two leaves of the tree, Maduel and Nutov~\cite{MaduelNutov10} gave a $\tfrac{17}{12}$-approximation. In case that the tree has radius 3 or 2, they could strengthen the bound to $\tfrac{11}{8}$ and $\tfrac{4}{3}$, respectively.

In terms of lower bounds on integrality gaps, Cheriyan \etal~\cite{CheriyanEtAl99} conjectured that the cut \LP has an integrality gap of $\tfrac{4}{3}$ for \WTAP. However, this conjecture has been refuted by Cheriyan \etal~\cite{CheriyanEtAl08}, who proved that the integrality gap of the cut \LP is at least $\tfrac{3}{2}$, even for \TAP. 
\section{Preliminaries\label{sec:preliminaries}}
We begin by restating the definition of the weighted tree augmentation problem (\WTAP). Recall that a graph is 2-edge-connected if and only if there are at least two edge-disjoint paths between all pairs of nodes. 

\introduceproblem{(Weighted) Tree Augmentation Problem (\WTAP)}
{An undirected tree $G = (V,E)$, an additional set of edges $L$ on $V$ called \emph{links}, and a cost vector $c \in \R^L_{\geq 0}$.}
{A minimum cost set of links $S$ such that $G = (V, E \cup S)$ is 2-edge-connected.}

The tree augmentation problem (\TAP) is then the special case where $c_\ell = 1$ for all $\ell \in L$. Notice that we do not demand that $E$ and $L$ are disjoint; consequently, we allow parallel edges in the union of $E$ and $S$. 
We assume without loss of generality that $c_\ell > 0$ for all $\ell \in L$, as we can always pick links with zero cost. Let $c_{min}$ and $c_{max}$ be the minimum and maximum link costs, then we scale the link costs by $1/c_{min}$, resulting in $c_\ell \in [1,c_{max}/c_{min}]$ for all $\ell \in L$. For \WTAP, we study the case that link costs are bounded from above by a constant $M \in \R_{\geq 1}$, \ie $c_\ell \leq M$ for all $\ell \in L$. For a graph $H$, we denote the set of nodes and edges by $V[H]$ and $E[H]$, respectively. For a set of nodes $V' \subseteq V[G]$, we refer to the set of edges between nodes in $V'$ by $E[V'] := \setc{e =\set{u,v}\in E[G]}{u,v \in V'}$. For a vector $x \in \R^N_{\geq 0}$, let $\supp(x) := \setc{i \in N}{x_i > 0}$ denote the support of $X$. For a subset $N' \subseteq N$, define $x(N') := \sum_{i \in N'} x_i$. Furthermore, we set $[k] := \set{1,\dots,k}$.

\paragraph*{Links.}
We write $e = \set{u,v}$ for an edge $e \in E$ connecting nodes $u$ and $v$, and we write $\ell = uv$ for a link connecting nodes $u$ and $v$. Since $G$ is a tree, there is a unique path between two nodes $u,v \in V[G]$. For an $\ell = uv \in L$, we refer to this path by $P_\ell^G$ and call all edges $e \in P_\ell^G$ \emph{covered} by $\ell$, since $P_\ell^G$ together with $\ell$ is 2-edge-connected. If $G$ is clear due to the context, we omit it. For a set $F \subseteq E[G]$, we define $\cov(F)$ to be the set of links covering at least one edge of $F$. For brevity, we write $\cov(e)$ instead of $\cov(\set{e})$. For a set of links $L' \subseteq L$ and a set of edges $E' \subseteq E$, we say that $L'$ covers $E'$ if every edge $e \in E'$ is covered by at least one link $\ell \in L'$. The concept of covering is highly relevant due to the following observation.

\begin{observation}
 \label{obs:feasibility}
 Given an instance $(G,L,c)$ of \WTAP, a set of links $S \subseteq L$ is a feasible solution if and only if every edge is covered by a link $\ell \in S$, \ie $\bigcup_{\ell \in S} P_\ell = E[G]$.
\end{observation}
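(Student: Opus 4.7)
The proof will rely on the standard characterization that a graph is 2-edge-connected if and only if it is connected and contains no bridge (i.e., removing any single edge leaves it connected). Since $(V, E \cup S)$ contains the tree $G = (V,E)$, it is automatically connected, so the task reduces to showing that the bridgeless condition is equivalent to the covering condition $\bigcup_{\ell \in S} P_\ell = E[G]$.

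For the forward direction, I would argue by contrapositive: assume some tree edge $e \in E$ is not covered by any $\ell \in S$. Since $G$ is a tree, $G - e$ consists of exactly two connected components $A$ and $B$, partitioning $V$. For any link $\ell = uv \in S$, the tree path $P_\ell$ passes through $e$ if and only if $u$ and $v$ lie in different components of $G - e$, so the fact that $e$ is uncovered means every link in $S$ has both endpoints in $A$ or both in $B$. Therefore removing $e$ from $(V, E \cup S)$ separates $A$ from $B$, so $e$ is a bridge and the graph is not 2-edge-connected.

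For the reverse direction, assume every tree edge is covered, and consider any edge $f$ of $(V, E \cup S)$. If $f = e$ is a tree edge, there is a link $\ell \in S$ with $e \in P_\ell$, and then $P_\ell \cup \{\ell\}$ is a cycle of $(V, E \cup S)$ containing $e$, so $e$ is not a bridge. If $f = \ell$ is a link, the same fundamental cycle $P_\ell \cup \{\ell\}$ witnesses that $\ell$ is not a bridge. Hence $(V, E \cup S)$ is connected and bridgeless, which is equivalent to 2-edge-connectedness, completing the proof.

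The statement is essentially structural and should present no real difficulty; the only subtlety worth making explicit is the observation that a link $\ell = uv$ covers $e$ exactly when $u$ and $v$ lie on opposite sides of the cut induced by $e$ in the tree, which is the step that ties the definition of $\cov(e)$ to the topological notion of a bridge.
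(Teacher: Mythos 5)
Your argument is correct: the paper states this as an unproved observation, treating it as standard, and your proof supplies the expected justification via the ``connected and bridgeless'' characterization of 2-edge-connectivity, with the key point being that a link $\ell = uv$ covers a tree edge $e$ exactly when $u$ and $v$ lie on opposite sides of the fundamental cut of $e$. One could slightly shorten the reverse direction by observing that any edge $f$ of $(V, E \cup S)$ lying on a cycle is automatically not a bridge, and every link together with its tree path forms such a cycle, while every covered tree edge lies on the cycle of whichever link covers it --- but this is the same idea you already use.
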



\paragraph*{Up-links, in-links, cross-links.}
For the following classification of links, we assume that our tree is rooted at an arbitrary node $w$. For a link $\ell = uv \in L$, let $\lca(uv) \in V[G]$ denote the \emph{least common ancestor} of $u$ and $v$ in $G$. If $\lca(uv) \not\in \set{u,v}$ and $\lca(uv) = w$, we call $\ell$ an \emph{cross-link}. Otherwise, we call $\ell$ an \emph{in-link}. We call an in-link $\ell$ an \emph{up-link}, if $\lca(uv) \in \set{u,v}$. Notice that links $\ell$ with $w \in \ell$ are up-links by this definition. We denote the set of up-links, cross-links and in-links by $\Lup$, $\Lcr$ and $\Lin$, respectively. Figure~\ref{figure:contraction} illustrates these types of links.

In order to analyze the approximation guarantee later on, we will need to study the cost carried by different types of links separately. Let $x \in \Rnn^L$ be a fractional solution to a \WTAP instance. Then we split $x$ into the parts belonging to cross-links $\xcr$ and in-links $\xin$ as follows: 
\[
 \xcr_\ell = \begin{cases} x_\ell & \ell \in \Lcr\\ 0 & \text{else}\end{cases}, \quad \xin_\ell = \begin{cases} x_\ell & \ell \in \Lin\\ 0 & \text{else} \end{cases} \quad \text{for all } \ell \in L.
\]

\begin{figure}[htbp]
 \centering
 \begin{tikzpicture}
	\begin{scope}[xshift=0cm]
   \node[root] (r) at (0,0) {};
	 \node[node] (v1) at (90:1) {};
	 \node[node] (v2) at (210:1) {};
	 \node[node] (v3) at (330:1) {};
	 \node[node] (v11) at ($(v1)+(-1,0)$) {};
	 \node[node] (v12) at ($(v1)+( 1,0)$) {};
	 \node[node] (v13) at ($(v1)+(-0.6,1)$) {};
	 \node[node] (v14) at ($(v1)+( 0.6,1)$) {};	
	 \node[node] (v21) at ($(v2)+(-1,0)$) {};
	 \node[node] (v31) at ($(v3)+( 1,0)$) {};
	 \node[node] (v32) at ($(v3)+( 0,-1)$) {};
	 \draw[edge] (r) edge (v1) edge (v2) edge (v3);
	 \draw[edge] (v1) edge (v12) edge (v14);
	 \draw[edge] (v2);
	 \draw[edge] (v3) edge (v32);	
	 \draw[edge] (v1) edge (v11) edge (v13);	
	 \draw[edge] (v2) edge (v21);
	 \draw[edge] (v3) edge (v31);
	 \draw[crosslink] (v21) edge (v11);
	 \draw[uplink] (r) edge (v14);
	 \draw[uplink] (r) edge (v21);
	 \draw[inlink] (v31) edge (v32);
	\end{scope}
	\begin{scope}[xshift=4cm, every loop/.style={}]
   \node[root] (r) at (0,0) {};
	 \coordinate (v1) at (90:1);
	 \coordinate (v2) at (210:1);
	 \node[node] (v3) at (330:1) {};
	 \node[node] (v12) at ($(v1)+( 1,0)$) {};
	 \node[node] (v13) at ($(v1)+(-0.6,1)$) {};
	 \node[node] (v14) at ($(v1)+( 0.6,1)$) {};	
	 \node[node] (v31) at ($(v3)+( 1,0)$) {};
	 \node[node] (v32) at ($(v3)+( 0,-1)$) {};
	 \draw[edge] (r) edge (v12) edge (v13) edge (v14) edge (v3);
	 \draw[edge] (v3) edge (v31) edge (v32);	
	 \draw[crosslink] (r) edge[in=140,out=200,loop] ();
	 \draw[uplink] (r) edge[bend right=10] (v14);
	 \draw[uplink] (r) edge[in=230,out=290,loop] ();
	 \draw[inlink] (v31) edge (v32);	
	\end{scope}	
	\begin{scope}[xshift=9.5cm]
   \node[node] (r) at (0,0) {};
	 \node[node] (v1) at (90:1) {};
	 \node[node] (v2) at (210:1) {};
	 \node[node] (v3) at (330:1) {};
	 \node[node] (v11) at ($(v1)+(-1,0)$) {};
	 \node[node] (v12) at ($(v1)+( 1,0)$) {};
	 \node[node] (v13) at ($(v1)+(-0.6,1)$) {};
	 \node[node] (v14) at ($(v1)+( 0.6,1)$) {};	
	 \node[node] (v21) at ($(v2)+(-1,0)$) {};
	 \node[node] (v31) at ($(v3)+( 1,0)$) {};
	 \node[node] (v32) at ($(v3)+( 0,-1)$) {};
	 \draw[edge] (r) edge (v2) edge (v3);
	 \draw[edge] (v1) edge (v12) edge (v14);
	 \draw[edge] (v2);
	 \draw[edge] (v3) edge (v32);	
	 \draw[marked] (r) edge (v1);
	 \draw[marked] (v1) edge (v11) edge (v13);	
	 \draw[marked] (v2) edge (v21);
	 \draw[marked] (v3) edge (v31);
	\end{scope}		
 \end{tikzpicture}
 \caption{The picture on the left shows a \TAP instance rooted at the white node with edges drawn solidly and links with dashed and dotted lines. The dotted link is a cross-link, the dashed link is an in-link and the dash-dotted links are up-links (and also in-links). The picture in the middle shows the same instance after the contraction of the cross-link, and the picture on the right shows a 4-bundle (the thick edges).\label{figure:contraction}}
\end{figure}
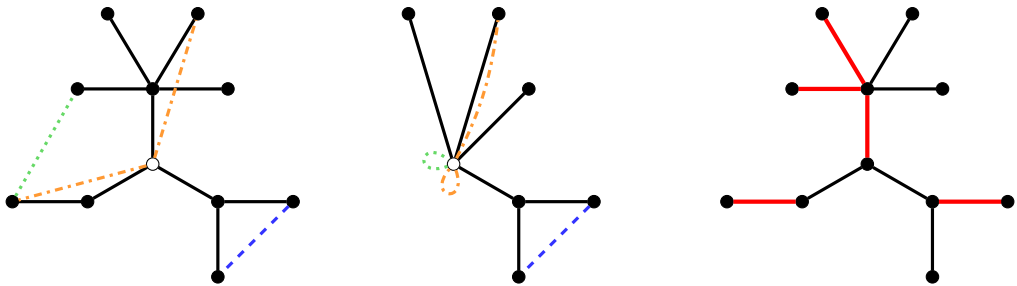

\paragraph*{Contractions.}
By contracting an edge $e = \set{u,v}$ we mean that $u$ and $v$ are replaced by a new node $w$. All edges $e' \in E(G)$ with either $u \in e$ or $v \in e$ are modified by replacing $u$ or $v$ with $w$, respectively. The edge $e = \set{u,v}$ is deleted. Links $\ell$ with either $u \in \ell$ or $v \in \ell$ are modified in the same way, but links $\ell = uv$ become self-loops of $w$ instead of being deleted. 

Contracting a set of edges $F$ is defined by contracting all edges $e \in F$, in any order. Contracting a link $\ell \in L$ refers to contracting all edges in $P_\ell$. Figure~\ref{figure:contraction} illustrates this.

\paragraph*{Bundles and the bundle \LP.}
The concept of bundles and the bundle \LP is due to~\cite{Adjiashvili17}. A \emph{$\gamma$-bundle} in a graph $G$ is the union of $\gamma$ paths in $G$ for an integer $\gamma \in \N$. We denote the set of all $\gamma$-bundles in $G$ by $\mathcal{B}_\gamma$. Notice that the paths of a $\gamma$-bundle do not have to be distinct, so $\mathcal{B}_\gamma \subseteq \mathcal{B}_{\gamma+1}$. Also, if $G$ is a tree, there are at most $\binom{|V(G)|}{2}$ distinct paths in $G$. Therefore, we have $|\mathcal{B}_\gamma| \leq \binom{|V(G)|}{2}^\gamma$ which is polynomial in the input size if $\gamma$ is a constant. Figure~\ref{figure:contraction} shows a $4$-bundle.

The \emph{bundle} \LP $(\LP_\gamma)$ adds the following constraints for every bundle in $\mathcal{B}_\gamma$ for a constant $\gamma \in \N$ to the cut \LP \eqref{eq:cut1} -- \eqref{eq:cut3}:
\begin{equation}
 \label{eq:bundle-LP}
 \sum_{\ell \in \cov(B)} c_\ell x_\ell \geq \OPT(B) \quad \text{ for all } B \in \mathcal{B}_\gamma .
\end{equation}
Above, $\OPT(B)$ is the minimum cost of any set of links in $L$ that covers $B$. These constraints are clearly valid for any integral solution, and we can also compute $\OPT(B)$ efficiently using the following lemma.

\begin{lemma}[Lemma A.1 in~\cite{Adjiashvili17}]
 \label{lemma:constantleaves}
 Let $(G,L,c)$ be a \WTAP instance, with $k$ being the number of leaves of $G$. An optimal solution to $(G,L,c)$ can be computed in time ${n^k}^{O(1)}$.
\end{lemma}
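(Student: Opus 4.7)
The plan is to design a dynamic programming algorithm whose running time is polynomial in $n$ with exponent polynomial in $k$, exploiting the small number of leaves.

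First, I would root $G$ at an arbitrary leaf and form the \emph{skeleton} tree $\tilde{G}$ by suppressing every degree-2 vertex of $G$. Since $G$ has $k$ leaves, $\tilde{G}$ has at most $2k-2$ vertices and $2k-3$ edges; each edge of $\tilde{G}$ corresponds to a maximal \emph{segment} of $G$, \ie a path whose internal vertices all have degree $2$ in $G$. I would then classify each link in $L$ as \emph{local} (both endpoints lie in the interior of a single segment) or \emph{non-local} (its path in $G$ traverses at least one branching vertex of $\tilde{G}$).

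Next, I would argue via exchange that some optimal solution admits a ``canonical'' form in which the non-local part is small. More precisely, for each ordered pair of arms at a branching vertex of $\tilde{G}$, the non-local links traversing that arm-pair can be pruned to at most $\mathrm{poly}(k)$ representatives: through-links whose path-intervals dominate one another can be dropped, and among links inducing the same coverage interval one may keep only the cheapest. Since there are only $O(k^2)$ arm-pairs across all branching vertices of $\tilde{G}$, the number of non-local links in a canonical optimal solution is bounded by some $\mathrm{poly}(k)$.

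Given this structural bound, I would enumerate all $|L|^{\mathrm{poly}(k)}=n^{\mathrm{poly}(k)}$ possible sets of non-local links and, for each, solve the residual problem: after fixing the non-local links, the task decouples on every segment of $\tilde{G}$ into an independent one-dimensional interval cover problem (cover the remaining uncovered edges of that segment using only local links), which is solvable in polynomial time by a standard sweep DP. Summing over the enumeration and the local solves, the total running time is $n^{\mathrm{poly}(k)}\cdot\mathrm{poly}(n)=n^{k^{O(1)}}$. The hard part will be step two: the canonical form and the accompanying bound on non-local links must be chosen carefully so that the resulting enumeration is both tractable and provably contains an optimum; in particular, one must ensure that pruning through-links does not remove coverage that is available only via combinations of them.
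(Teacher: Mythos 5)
The paper does not prove this lemma---it is imported wholesale from Adjiashvili~\cite{Adjiashvili17}---so there is no in-paper proof to compare against; I will assess your argument on its own terms. Your overall plan (build the skeleton tree with $O(k)$ segments, separate links into local and non-local, enumerate a small set of non-local links, then cover each segment's residual by an interval-cover DP) is the right shape. But the step you yourself flag as ``the hard part'' is genuinely incomplete as written. Pruning through-links by path containment does not bound the number of non-local links in a canonical optimum: those links can form a large antichain under containment (one reaching deep into arm $A_1$ and shallow into $A_2$, another doing the opposite, neither dominating). Also, your dichotomy local/non-local is not exhaustive as stated---define ``local'' to mean the tree path $P_\ell$ is contained in a single segment, not that both endpoints are in its interior, so that links touching a branching vertex are handled.

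Here is the exchange that closes the gap. Observe first that any non-local link $\ell$ covering an edge of a segment $P$ must cover a contiguous \emph{prefix} of $P$ from one of $P$'s two endpoints: $P_\ell\cap P$ is a sub-path of $P$, and since interior vertices of $P$ have degree $2$, if neither endpoint of $P_\ell\cap P$ is an endpoint of $P$ then $P_\ell\subseteq P$, contradicting non-locality. Now fix an optimum $S$, and for each segment $P$ and end $a$ of $P$ let $\ell_{P,a}\in S$ be the non-local link whose covered prefix from $a$ is longest (if one exists). Set $S' := S_{\mathrm{local}} \cup \{\ell_{P,a}\}_{(P,a)}$. If an edge $e$ of $P$ lies strictly between the prefixes covered by $\ell_{P,a}$ and $\ell_{P,b}$, no non-local link of $S$ can cover $e$, as it would reach deeper than a chosen dominant link; hence $e$ is covered by a local link of $S$. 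So $S'\subseteq S$ is feasible, optimal, and has at most $2(2k-3)=O(k)$ non-local links. Enumerating all subsets of $L$ of size $O(k)$ costs $|L|^{O(k)}\le n^{O(k)}$, and for each guess the residual on every segment is a single contiguous interval solvable by a sweep DP, yielding the stated ${n^k}^{O(1)}$ bound. Note this $O(k)$ bound matters: the $\mathrm{poly}(k)$ bound you gesture at would only give $n^{k^{O(1)}}$, which is weaker than the lemma claims.
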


For a $\gamma$-bundle $B$ in $G$, we know that $B$ is a forest with at most $2\gamma$ leaves. In order to apply this lemma, we contract all edges in $E[G] \setminus B$ to obtain an equivalent tree with at most $2\gamma$ leaves. Then we can apply Lemma~\ref{lemma:constantleaves} to obtain an optimal solution. 

We will later consider the case where we have the set of $\gamma$-bundles $\mathcal{B}_\gamma$ of a graph $G$, and then contract edges in $E[G]$ to obtain a subgraph $G'$. In this case, we are interested in the set of edges in $G'$ that are a $\gamma$-bundle in $G$. We refer to the set of edges in $G'$ that are $\gamma$-bundles in $G$ by $\mathcal{B}_\gamma^{G'}$. Notice that due to the contractions, it is possible that a $\gamma$-bundle in $G'$ is not a $\gamma$-bundle in $G$.
\section{A Stronger \LP}

In this section, we introduce the \oflps which strengthens
the bundle \LP used in~\cite{Adjiashvili17}. We use our 
\LP in the algorithmic framework described in 
Section~\ref{sec:decomposition} to obtain better 
approximation guarantees. The additional constraints that 
we need are Chv\'atal-Gomory cuts obtained from the cut \LP.

\subsection{Deriving and Separating the Odd-Cut Constraints}

\paragraph*{Derivation.}
Consider an \IP of the form $\min \{c^\intercal x \mid Ax \geq b,\ 
x \in \Z^n\}$, with $A \in \R^{m \times n}$ and $b \in \R^{m}$. 
A \emph{Chv\'atal-Gomory cut} \cite{Chvatal73,Gomory58} is a
constraint of the form $\lambda^\intercal A x \geq \lceil 
\lambda^\intercal b \rceil$, 
where the vector of multipliers $\lambda \in \R^m_{\geq 0}$ is 
chosen in such a way that $\lambda^\intercal A \in \Z^n$. Clearly, 
any such constraint is valid for the integer solutions of the \IP. 
It is well known that the cuts obtained for $\lambda \in [0,1)^m$
imply the cuts obtained for larger multipliers, hence one can 
assume $\lambda \in [0,1)^m$ without loss of generality. (A proof of
this fact and more background on the Chv\'atal-Gomory cuts can be
found, for instance, in~\cite{CCZ2014}.) In case we restrict further 
$\lambda$ to be in $\{0,\frac{1}{2}\}^m$, we obtain a 
\emph{$\{0,\frac{1}{2}\}$-Chv\'atal-Gomory cut}~%
\cite{CapraraFischetti96}. These cuts are a proper specialization 
of the Chv\'atal-Gomory cuts, and are precisely the cuts that we 
use here.

A $\{0,\frac{1}{2}\}$-Chv\'atal-Gomory cut for the cut \LP is any
constraint of the form
\begin{equation}
\label{eq:0-1/2-cut} \sum_{e \in E[G]} \lambda_e x(\cov(e)) + \sum_{\ell \in L} \mu_\ell x_\ell \geq \Big\lceil\sum_{e \in E[G]}\lambda_e\Big\rceil
\end{equation}
where $\lambda \in \{0,\frac{1}{2}\}^{E[G]}$ and $\mu \in \{0,
\frac{1}{2}\}^{L}$ are such that the coefficients in the left-hand
side are all integral. Notice that for any fixed $\lambda$, there is a unique $\mu$ that achieves this.

Let $K := \supp(\lambda) = \{e \in E[G] \mid \lambda = \frac{1}{2}\}$. Since $G$ is a tree, there exists a (not necessarily connected) set $S \subseteq V[G]$ such that $K = \delta_G(S)$. Notice that the right-hand side of \eqref{eq:0-1/2-cut} is $\lceil |\delta_G(S)|/2 \rceil$, and that the cut is redundant whenever $|\delta_G(S)|$ is even.

Let $\pi(S)$ denote the \emph{multiset} of links $\ell$ such that $P_\ell$ intersects $\delta_G(S)$, in which the multiplicity of $\ell$ is defined as $\lceil \frac{1}{2} |P_\ell \cap \delta_G(S)| \rceil$ (see Figure~\ref{fig:pi_of_S}). Now, assuming that $|\delta_G(S)|$ is odd, we can rewrite the constraint~\eqref{eq:0-1/2-cut} as 
\begin{equation}
\label{eq:0-1/2-cut_pi}
x(\pi(S))\geqslant \frac{|\delta_G(S)|+1}2\,.
\end{equation}

\begin{figure}[htb]
 \begin{center}
  \begin{tikzpicture}
	 \draw[blue!80,thick,fill=blue!20] (1.5,4.3) circle (1.2cm);
	 \draw[blue!80,thick,fill=blue!20] (4.8,4.1) circle (1.2cm);
	 \draw[blue!80,thick,fill=blue!20] (3.2,1.7) circle (1.1cm);
	 \node[blue!80,scale=1.0] (text) at (1.5,5.0) {$S_1$};
	 \node[blue!80,scale=1.0] (text) at (4.8,4.8) {$S_2$};
	 \node[blue!80,scale=1.0] (text) at (3.2,2.4) {$S_3$};
	 \node[node] (a) at (0.0,5.0) {};
	 \node[node] (b) at (0.3,3.0) {};
	 \node[node] (c) at (0.6,4.5) {};
	 \node[node] (d) at (0.9,4.9) {};
	 \node[node] (e) at (1.2,4) {};
	 \node[node] (f) at (1.4,3.4) {};
	 \node[node] (g) at (1.5,2.6) {};
	 \node[node] (h) at (1.6,1.4) {};
	 \node[node] (i) at (1.8,4.3) {};
	 \node[node] (j) at (1.9,4.8) {};
	 \node[node] (k) at (2.1,3.0) {};
	 \node[node] (l) at (2.2,3.6) {};	 
	 \node[node] (m) at (2.4,0.6) {};
	 \node[node] (n) at (2.5,4.2) {};
	 \node[node] (o) at (2.5,1.4) {};
	 \node[node] (p) at (2.8,3.2) {};
	 \node[node] (q) at (3.2,1.1) {};
	 \node[node] (r) at (3.2,2.0) {};
	 \node[node] (s) at (3.3,4.4) {};
	 \node[node] (t) at (3.6,5.0) {};
	 \node[node] (u) at (3.8,1.4) {};
	 \node[node] (v) at (4.0,2.2) {};
	 \node[node] (w) at (4.2,2.8) {};
	 \node[node] (x) at (4.2,3.4) {};
	 \node[node] (y) at (4.0,4.1) {};
	 \node[node] (z) at (4.8,3.1) {};
	 \node[node] (aa) at (5.1,4.2) {};
	 \node[node] (ab) at (5.3,3.6) {};
	 \node[node] (ac) at (5.4,4.5) {};
	 \node[node] (ad) at (5.6,3.8) {};
	 \node[node] (ae) at (6.0,3.0) {};
	 \node[node] (af) at (6.3,3.4) {};
	 \node[node] (ag) at (6.0,5.1) {};
	 \draw[edge,cut] (a) edge (c);
	 \draw[edge,cut] (b) edge (f);
	 \draw[edge] (c) edge (d) edge (e);
	 \draw[edge] (e) edge[bend right, link, green!60!black!100] node[auto,swap,scale=0.8,pos=0.75] {1}  (b) edge (f) edge (i);
	 \draw[edge] (i) edge (j) edge (l) edge (n);
	 \draw[edge,cut] (l) edge (k) edge (p);
	 \draw[edge] (g) edge (k) edge[bend right, link, red] node[auto,swap,scale=0.8] {3}  (h);
	 \draw[edge] (s) edge[cut] (n) edge (t) edge[cut] (y) edge[bend right, link, red] node[auto,swap,scale=0.8] {1}  (w);
	 \draw[edge] (x) edge[cut] (w) edge (y) edge (z);
	 \draw[edge] (v) edge[cut] (w) edge (r);
	 \draw[edge] (o) edge[cut] (h) edge[cut] (m) edge (r);
	 \draw[edge] (aa) edge (y) edge (ab) edge (ac) edge (ad);
	 \draw[edge] (r) edge (q) edge (u);
	 \draw[edge] (af) edge[bend left=50, link, green!60!black!100] node[auto,scale=0.8] {2}  (q);
	 \draw[edge] (ad) edge[cut] (ae) edge[cut] (af) edge[bend left,link, red] node[auto,scale=0.8] {1} (u);
	 \draw[edge] (ag) edge[bend right, link, red] node[auto,swap,scale=0.8] {2}  (a) edge[cut] (ac);
	 
	\end{tikzpicture}  
 \end{center}
\caption{A set $S=S_1\cup S_2\cup S_3$ of vertices and the
  corresponding connected components it induces on tree $G$. Edges
  $e \in \delta_G(S)$ are blue. Links $\ell \in \delta_L(S)$
  are in green. The other links are in red. For each link $\ell$,
  the figure gives its multiplicity in $\pi(S)$.
}
\label{fig:pi_of_S}
\end{figure}
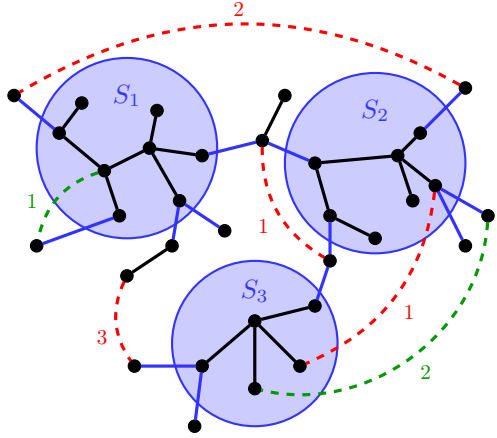

Let $\mathcal{S}$ denote the collection of all sets $S$ such that $|\delta_G(S)|$ is odd, and let the \emph{\oddcutlp} be the \LP resulting from the cut \LP after adding the \emph{odd-cut constraint} \eqref{eq:0-1/2-cut_pi} for each set $S \in \mathcal{S}$:

\[
 \begin{lp}{min}{\sum_{\ell \in L} c_\ell x_\ell}
  \lpconstraint{x(\pi(S))}{\geq}{\frac{|\delta_G(S)|+1}{2}}{S \in \mathcal{S}}
  \finallpconstraint{x_\ell}{\geq}{0}{\ell\in L}
 \end{lp}
\]

We remark that the first set of constraints includes in 
particular the covering constraint $x(\cov(e)) \geq 1$ for 
each tree edge $e \in E[G]$. Indeed, if $S$ denotes any of 
the two connected components arising when $e$ is deleted from 
$G$, we have $\delta_G(S) = \{e\}$ and thus $S \in \mathcal{S}$.
The corresponding odd-cut constraint is the covering 
constraint.

\paragraph*{Separation.}
A nice fact that is key to our approach is that separation 
of the $\{0,\frac{1}{2}\}$-Chv\'atal-Gomory cuts arising 
from an \IP $\min \{c^\intercal x \mid Ax \geq b,\ x \in \Z^n\}$ 
(here we take $A \in \Z^{m \times n}$ and $b \in \Z^m$) can be 
done in polynomial time whenever the matroid represented over 
$GF(2)$ by $(I\,\bar{A})$ is graphic (or co-graphic), where 
$\bar{A} := A \pmod{2}$ is the parity matrix of $A$. This was
proved by Caprara and Fischetti~\cite{CapraraFischetti96}, and
implies directly that the constraints of the \oddcutlp can be
separated in polynomial time.

A more direct way to prove this is to rewrite inequality 
\eqref{eq:0-1/2-cut_pi} as the \emph{$T$-cut constraint}%
\footnote{Taking $T$ to be the set of odd-degree nodes of
the tree $G$.} $x(\delta_L(S)) + y(\delta_G(S)) \geq 1$ 
after introducing the slack variables $y_e := x(\cov(e)) - 1 \geq 0$ for $e \in E[G]$, 
and then solve the separation problem with the algorithm of 
Padberg and Rao~\cite{PadbergRao82}.

To see this, let $S \in \mathcal{S}$ and let $H$ be the graph obtained from adding all links in $L$ to $G$. Assuming that $y_e = x(\cov(e)) - 1$ for all $e \in E[G]$, we can rewrite the odd-cut constraint~\eqref{eq:0-1/2-cut_pi} as:
\begin{eqnarray*}
  x(\pi(S)) & \geq & \frac{|\delta_G(S)| + 1}2\\
  \iff \frac12\sum_{e \in \delta_G(S)}x(\cov(e)) + \frac12
  x(\delta_L(S)) & \geq &  \frac{|\delta_G(S)| + 1}2\\
  \iff \sum_{e \in \delta_G(S)} \underbrace{(x(\cov(e)) - 1)}_{=y_e} + 
  x(\delta_L(S)) & \geq & 1\\
  \iff (x,y)(\delta_H(S)) & \geq & 1\,.
\end{eqnarray*}
Above, the second form is the original expression of the odd-cut constraint as a $\{0,\frac{1}{2}\}$-Chv\'atal Gomory cut, see~\eqref{eq:0-1/2-cut}. 
 
\subsection{Exactness of the Odd-Cut \LP when all the In-links are Up-links}

An integer matrix $M$ is said to be the \emph{incidence matrix of a bidirected graph} if $\sum_i |M_{ij}| \leq 2$ for every fixed column index $j$. A \emph{binet matrix} is any matrix of the form $B = S^{-1} R$ where $M = (S\, R)$ is the incidence matrix of a bidirected graph with full row-rank and $R$ is a basis of $M$. Binet matrices are a generalization of network matrices and were introduced by Appa and Kotnyek~\cite{AppaKotnyek04}.

Appa \etal~\cite{AppaEtAl07} proved the following result, extending results of Edmonds and Johnson~\cite{EdmondsJohnson70,EdmondsJohnson73} for incidence matrices of bidirected graphs.   

\begin{theorem} \label{theorem:binet_zhcgcuts}
For every binet matrix $A \in \Z^{m \times n}$ and every vector $b \in \Z^{m}$, the integer hull of the polyhedron $P := \setc{x \in \R^n}{Ax \geq b,\ x \geq 0}$ is described by its \zhcgcuts.
\end{theorem}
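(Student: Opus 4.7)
My plan is to reduce the statement to the classical Edmonds--Johnson theorem for incidence matrices of bidirected graphs, which establishes the analogous result in that restricted case. Let $M = (S \mid R)$ be the incidence matrix of the bidirected graph realizing $A$, with $S$ an invertible basis so that $A = S^{-1}R$. The natural move is to introduce a lifting: associate to $P$ a polyhedron $\tilde P$ in a larger space indexed by the columns of $M$ such that $P$ is the image of $\tilde P$ under the projection onto the $R$-coordinates, and such that the defining constraints of $\tilde P$ take the bidirected form $Mz \geq d$, $z \geq 0$, for some integer vector $d$ derivable from $b$. The invertibility of $S$ combined with the integrality of $A$ is exactly what lets $d$ be chosen integer.

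Having set up the lift, I would apply the Edmonds--Johnson theorem to $\tilde P$: its integer hull is cut out by the constraints $Mz \geq d$, $z \geq 0$ together with all their \zhcgcuts. The crux is then to push this description back from $\tilde P$ to $P$. Given such a cut $\lambda^\intercal M z \geq \lceil \lambda^\intercal d \rceil$ with $\lambda$ having entries in $\set{0,\tfrac12}$, one substitutes the affine expressions for $z$ in terms of $x$ coming from the lift (namely $z_R = x$ and $z_S = S^{-1}b - A x$). The resulting inequality then takes the form $\mu^\intercal A x \geq \lceil \mu^\intercal b \rceil$ for a new multiplier vector $\mu$ computed from $\lambda$.

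The main obstacle is the final parity bookkeeping: one must verify that, after normalization, $\mu$ still lies in $\set{0,\tfrac12}^m$, so that the projected inequality is genuinely a $\set{0,\tfrac12}$-Chv\'atal--Gomory cut of the original system and not a more general Chv\'atal--Gomory cut. This requires exploiting the signed-graphic structure of $S$ and $S^{-1}$: because the rows of $S$ and $R$ come from the incidence relations of a bidirected graph, the entries of $S^{-1}$ are half-integers whose parities can be controlled, and the parity matrix $A \pmod 2$ has the combinatorial form needed to convert a $\set{0,\tfrac12}$-multiplier $\lambda$ in the lifted system into a $\set{0,\tfrac12}$-multiplier $\mu$ in the original one. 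Alternatively, one could bypass this projection argument by a direct total-dual-integrality proof in the spirit of Edmonds and Johnson's original matching argument; either route depends essentially on the combinatorial structure encoded by the bidirected graph underlying the binet representation.
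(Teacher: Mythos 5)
The paper does not actually prove Theorem~\ref{theorem:binet_zhcgcuts}: it is quoted verbatim from Appa, Kotnyek, Papalamprou and Pitsoulis~\cite{AppaEtAl07} and used as a black box, with only the remark that their result ``extends results of Edmonds and Johnson for incidence matrices of bidirected graphs.'' So there is no in-paper proof to compare against, and any proof attempt here is necessarily a reconstruction of the cited reference.

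On its own merits, your reduction-via-lifting plan is the right sort of move and is consistent with the paper's attribution, but it is a sketch with a genuine hole, not a proof. Two concrete issues. First, the lift as written is off: with $A = S^{-1}R$, the natural integer-preserving lift of $\setc{x}{Ax\geq b,\ x\geq 0}$ is via the slack $s := Ax - b \geq 0$, which after multiplying by $S$ gives the bidirected \emph{equality} system $Rx - Ss = Sb$, $x,s\geq 0$ with constraint matrix $(R\ {-S})$ and right-hand side $Sb$; your formula $z_S = S^{-1}b - Ax$ has the wrong sign (it would impose $Ax \leq S^{-1}b$) and your stated form $Mz\geq d$ does not project to $P$. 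These are fixable. Second, and more seriously, the final step you flag as ``parity bookkeeping'' is precisely the mathematical content of the theorem, and you leave it unproved. After eliminating $s$ from a \zhcgcut of the lifted system one obtains \emph{some} Chv\'atal--Gomory cut for $Ax\geq b$, $x\geq 0$, but the multipliers are now of the form $\lambda^\intercal S$ (up to signs) and there is no a priori reason these should reduce to $\set{0,\tfrac12}$ multipliers for the \emph{original} rows. Saying that ``the entries of $S^{-1}$ are half-integers whose parities can be controlled'' is an assertion, not an argument, and in fact it is exactly where Appa~\etal~have to work: without a careful analysis of the $GF(2)$ structure of the binet representation this step can simply fail, producing a CG cut that is not a \zhcgcut. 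As it stands, the proposal replaces the hard part of the theorem by a gesture toward it, so it should not be accepted as a proof; one would either need to carry out the mod-2 analysis of $S$, $S^{-1}$ and $A$ in detail, or, as you note, give a direct TDI/matching-style argument \`a la Edmonds--Johnson, which is a substantially different and longer route.
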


Pick any root $r \in V[G]$ in tree $G$. We are now ready to recall 
and prove Theorem~\ref{theorem:exact}, which provides a key property
of the \oddcutlp that we use in our approach to approximate \WTAP.

\exact*

\begin{proof}
Let $(G,L,c)$ denote the \WTAP instance and let $r$ denote
the chosen root. For a directed edge $e = (u,v)$, let $z = z(e)
\in \R^{V[G] \setminus \{r\}}$ denote the truncated \emph{incidence 
vector} of $e$ (since
the row corresponding to root $r$ is removed), defined as $z_a := 1$ if $a = u$, $z_a := -1$ 
if $a = v$ and $z_a = 0$ otherwise, for $a \in V[G] \setminus 
\{r\}$.

Direct all the edges of $G$ away from the root and let $R 
\in \R^{(V[G] \setminus \{r\}) \times E[G]}$ denote the 
(truncated) \emph{incidence matrix} of the resulting directed 
graph $\overrightarrow{G}$. Every directed edge $e = (u,v)$ of 
$\overrightarrow{G}$ has a corresponding column $z(e)$ in 
$R$. 

Now, we define a matrix $S \in \R^{(V[G] \setminus \{r\}) 
\times L}$ encoding the links of \WTAP instance. Each link
$\ell = uv\in L$ has a corresponding column $S_{\star \ell}$ 
in $S$. If $\ell = uv$ is an up-link with $u = \lca(uv)$, we let 
$S_{\star \ell}$ be the incidence vector $z(u,v)$ of the directed 
edge $(u,v)$. If $\ell = uv$ is a cross-link, then $u, v \neq
r$ and the corresponding column $S_{\star \ell}$ has $S_{a \ell}
:= 1$ if $a \in \{u,v\}$ and $S_{a \ell} := 0$ otherwise.

Consider the matrix $M = (S\, R)$. Since $\sum_{a \neq r} 
|S_{a\ell}| \leq 2$ for all $\ell \in L$ and $\sum_{a \neq r} 
|R_{ae}| \leq 2$ for all $e \in E[G]$, we see that $M$ is the
incidence matrix of a bidirected graph. Moreover, $M$ has $R$ 
as a basis. Thus $B := R^{-1} S$ is a binet matrix. We claim
that $B \in \R^{(V[G] \setminus \{r\}) \times L} \cong 
\R^{E[G] \times L}$ is the constraint matrix $A$ of the cut \LP 
$\min \{c^\intercal x \mid Ax \geq 1,\ x \geq 0\}$.

In other words, we claim that $A = R^{-1}S$, or equivalently
$R A = S$. This actually is clear: For every link $\ell = uv$, 
the sum of the columns $z(e)$ of $R$ that correspond to the tree 
edges $e = (u,v)$ that are in $P_\ell$ is the corresponding 
column of $S$.

By Theorem~\ref{theorem:binet_zhcgcuts}, the cut \LP becomes integral after one round of $\{0,\frac{1}{2}\}$-Chv\'atal Gomory cuts. That 
is, the \oddcutlp is integral.
\end{proof}

We point out that for these \WTAP instances, there is a 
combinatorial algorithm that finds an optimum solution in
strongly polynomial time. This follows from our proof of 
Theorem~\ref{theorem:exact} and an algorithm of Edmonds
and Johnson~\cite{EdmondsJohnson70,EdmondsJohnson73}, see 
also~\cite{AppaEtAl07}.

\subsection{The Odd-Cut Bundle \LP}

As its name indicates, the \oflps contains all of the constraints of the \oddcutlp and additionally the bundle constraints for a constant $\gamma \in \N$. As before, let $\mathcal{S}$ denote the collection of all sets $S \subseteq V[G]$ such that $|\delta_G(S)|$ is odd, let $\mathcal{B}_\gamma$ be the set of all $\gamma$-bundles, and let $\OPT(B)$ the cost of an integral optimal solution for the \WTAP instance obtained from contracting all edges not in $B$ (with respect to the given costs $c_\ell$). Then, the \oflp is given by:
\[
 \begin{lp}{min}{\sum_{\ell \in L} c_\ell x_\ell}
  \lpconstraint{x(\pi(S))}{\geq}{\frac{|\delta_G(S)|+1}{2}}{S \in \mathcal{S}}
  \lpconstraint{\sum_{\ell \in \cov(B)} c_\ell x_\ell}{\geq}{\OPT(B)}{B \in \mathcal{B}_\gamma}
  \finallpconstraint{x_\ell}{\geq}{0}{\ell\in L}
 \end{lp}
\]
From Lemma~\ref{lemma:constantleaves} and the discussion above, we obtain the following result.

\begin{lemma} \label{lem:solving_main_lp}
For any constant $\gamma \in \N$, the \oflp can be solved in polynomial time.
\end{lemma}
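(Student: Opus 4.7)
The plan is to apply the ellipsoid method, which reduces the task to designing a polynomial-time separation oracle for the two nontrivial families of constraints in the LP: the bundle constraints and the odd-cut constraints.

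Handling the bundle constraints is essentially a counting argument. Since $\gamma$ is a fixed constant, $|\mathcal{B}_\gamma| \leq \binom{|V[G]|}{2}^\gamma$ is polynomial in the input size. For each $B \in \mathcal{B}_\gamma$, contracting all edges of $E[G] \setminus B$ produces an equivalent \WTAP instance whose underlying tree has at most $2\gamma$ leaves, so $\OPT(B)$ can be computed in polynomial time via Lemma~\ref{lemma:constantleaves}. Hence all bundle constraints can be listed explicitly and checked one by one.

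For the odd-cut constraints, I would use the $T$-cut reformulation already derived in the excerpt. Given a query point $x$, first verify every single-edge cut constraint $x(\cov(e)) \geq 1$ (each of which is itself an odd-cut constraint, obtained by taking $S$ equal to one component of $G - e$); if any is violated, return it as a separating hyperplane. Otherwise, set $y_e := x(\cov(e)) - 1 \geq 0$ and let $H = (V[G], E[G] \cup L)$ with edge capacities given by $(x,y)$. By the algebraic manipulation in the excerpt, a set $S \in \mathcal{S}$ violates its odd-cut constraint exactly when $(x,y)(\delta_H(S)) < 1$, and the defining condition ``$|\delta_G(S)|$ odd'' translates into the parity condition ``$|S \cap T|$ odd'' for $T := \{v \in V[G] : \deg_G(v)\text{ is odd}\}$. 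Thus finding a violated odd-cut constraint (or certifying that none exists) reduces to computing a minimum $T$-cut in $H$ with capacities $(x,y)$, which is polynomially solvable by the algorithm of Padberg and Rao~\cite{PadbergRao82}.

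The only mildly delicate point is the equivalence between the parity condition defining $\mathcal{S}$ and the parity condition defining $T$-cuts, which follows from the standard identity $|\delta_G(S)| \equiv \sum_{v \in S} \deg_G(v) \pmod{2}$. With this and the reformulation in place, combining the two separation routines produces the desired polynomial-time oracle, and the ellipsoid method completes the proof.
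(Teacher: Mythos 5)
Your proposal is correct and follows essentially the same route as the paper: enumerate the polynomially many bundle constraints explicitly (using Lemma~\ref{lemma:constantleaves} to compute each $\OPT(B)$), and separate the odd-cut constraints via the slack-variable $T$-cut reformulation solved with Padberg--Rao. You are somewhat more careful than the paper about spelling out why the $y_e \geq 0$ preconditions hold and why the parity condition on $\delta_G(S)$ matches the $T$-cut parity condition, but these are details the paper leaves implicit, not a genuinely different argument.
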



\section{Decomposition of \LP solutions\label{sec:decomposition}}
In this section, we will discuss how we use solutions of the \oflp to solve a \WTAP instance approximately. We use the approach of~\cite{Adjiashvili17}, with two major differences:
\begin{itemize}
 \item We use the \oflps with $\gamma = \ceil{\tfrac{28M}{\eps^2}}$ instead of the bundle \LP as basis for the decomposition and subsequent rounding.
 \item The additional $\set{0,\tfrac{1}{2}}$-Chv\'atal-Gomory cuts in the \oflp compared to the bundle \LP allow us to round instances with only up-links and cross-links without increasing cost. Since in-links can be split into two up-links, this yields a rounding that increases the cost incurred by in-links by a factor of 2 and leaves the cost of cross-links unchanged. This replaces a rounding based on a reduction to edge cover in~\cite{Adjiashvili17} that increases the cost of in-links by a factor of $2\lambda$ and the cost of cross-links by a factor $\tfrac{4}{3}\tfrac{\lambda}{\lambda-1}$ for some $\lambda > 1$.
\end{itemize}

For the decomposition, we will assume that we are given a \emph{shadow-complete} \WTAP instance $(G,L,c)$ and a fractional solution $x$ to the \oflps for $\gamma = \ceil{\tfrac{28M}{\eps^2}}$. An instance $(G,L,c)$ is \emph{shadow-complete}, if for every link $\ell \in L$ all its \emph{shadows} -- \ie links $\ell'$ with $P_{\ell'} \subseteq P_\ell$ -- are also in $L$.
\footnote{We can assume without loss of generality that the instance is shadow-complete -- otherwise, we can add the missing shadows with a cost equal to the cheapest original link of which they are a shadow. Should these links appear in any solution, we can replace them with the original link (which covers even more edges) at no cost.} We assume that the cost of a shadow $\ell'$ of $\ell$ fulfills $c_{\ell'} \leq c_\ell$ -- otherwise, we can set $c_{\ell'} := c_\ell$.
The decomposition begins by computing a solution $L^h$ that covers all \emph{heavily-covered} edges 
\[ E^h := \setc{e \in E}{x(\cov(e)) \geq \tfrac{2}{\eps}} \]
then contracts $L^h$ and obtains a new tree $\overline{G}$ that only contains edges $e$ with $x(\cov(e)) < \tfrac{2}{\eps}$. Then, we decompose $\overline{G}$ and $x$ by splitting repeatedly along \emph{$\alpha$-thin} edges, which are defined as follows.

For an edge $e = \set{u,v}$, let $G^u$ and $G^v$ be the sub-trees of $\overline{G}$ that are obtained by deleting $e$, with $G^u$ containing $u$ and $G^v$ containing $v$, respectively.
An edge $e = \set{u,v}$ is \emph{$\alpha$-thin} with respect to $x$ for an $\alpha \in \R_{\geq 0}$ if the costs of links on both sides of the edge are at least $\alpha$ (we ignore links covering $e$ here):
\begin{equation}\label{eq:alphathin}
 \sum_{\ell \in L, \ell \subseteq V[G^u]} c_\ell x_\ell \geq \alpha, \quad \sum_{\ell \in L, \ell \subseteq V[G^v]} c_\ell x_\ell \geq \alpha.
\end{equation}
If $\overline{G}$ has an $\alpha$-thin edge $e = \set{u,v}$ with respect to $x$, we \emph{split $\overline{G}$ and $x$ along $e$}:

 We obtain $x^u$ and $x^v$ by splitting all links covering $e$ in $x$ into two parts, the part in $G^u$ and the part in $G^v$: 
\[
 x^u_{\ell=pq}\ := \begin{cases} 
                     x_\ell                                      & \text{if } p,q \in V[G^u] \setminus\set{u},\\ 
										 0                                           & \text{if } p \in V[G^v]\text{ or }q \in V[G^v], \\ 
										 x_\ell + \displaystyle\sum\limits_{\substack{\ell' \in \cov(e)\\p\in\ell'}} \frac{c_{\ell'}}{c_\ell} x_{\ell'} & \text{if } p \in V[G^u], q = u,
										\end{cases} \quad \text{for all } \ell \in L.
\]
$x^v$ is defined symmetrically. Notice that $\tfrac{c_{\ell'}}{c_\ell} \geq 1$: this term ensures that the cost in the sub-trees does not decrease -- this is important for the bundle constraints. Figure~\ref{figure:edgesplittingS} depicts this splitting. If $G^u$ contains an $\alpha$-thin edge with respect to $x^u$, we repeat this procedure for $G^u$ and $x^u$, and similarly for $G^v$ and $x^v$ if $G^v$ contains an $\alpha$-thin edge with respect to $x^v$. At the end, we obtain a set of pairs $\setc{(G^i,x^i)}{i \in [k]}$ such that no $G^i$ contains an $\alpha$-thin edge with respect to $x^i$. We refer to the contraction of $E^h$ and the repeated edge-splitting as \emph{$\alpha$-thin edge decomposition} of $G$ and $x$. Notice that we have $\supp(x^i) \cap \supp(x^j) = \emptyset$ for $i \neq j$ -- this allows~\cite{Adjiashvili17} and us to round each $(G^i,x^i)$ independently from each other later on. 

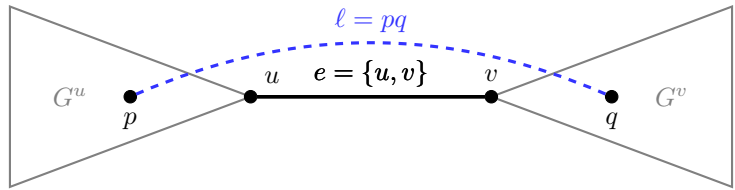
\begin{figure}[htbp]
 \centering
 \begin{tikzpicture}[xscale=0.8,yscale=0.6]
	\begin{scope}[xshift=0cm]
	 \draw[thick,gray] (0,0) -- (-4,-2) -- (-4,2) -- (0,0);
	 \node[gray] (gu) at (-3,0) {$G^u$};
	 \draw[thick,gray] (4,0) -- (8,-2) -- (8,2) -- (4,0);
	 \node[gray] (gu) at (7,0) {$G^v$};
   \node[node,fill,label={45:$u$}] (u) at (0,0) {};
	 \node[node,fill,label={90:$v$}] (v) at (4,0) {};
	 \draw[edge] (u) -- node[auto] {$e = \set{u,v}$} (v);
	
	 \draw[edge] (u) -- node[auto] {$e = \set{u,v}$} (v);
	 \draw[edge] (u) -- node[auto] {$e = \set{u,v}$} (v);
	 
	 \node[node,fill,label={270:$p$}] (p) at (-2,0) {};
	 \node[node,fill,label={270:$q$}] (q) at (6,0) {};
	 \draw[link] (p) edge[bend left] node[auto] {$\ell = pq$} (q);
	\end{scope}
 \end{tikzpicture}
 \caption{We split along $e = \set{u,v}$. For the dashed link $\ell = pq$ we have $x^u_\ell = x^v_\ell = 0$, since it covers $e$. In order to ensure that the coverage of edges in $G^u$ and $G^v$ does not decrease, we increase $x^u_{pu}$ and $x^v_{vq}$ by $\tfrac{c_\ell}{c_{pu}}x_\ell$ and $\tfrac{c_\ell}{c_{vq}}x_\ell$, respectively. \label{figure:edgesplittingS}}
\end{figure}

In the next lemma, we list important properties that the decomposition $(G^i,x^i), i \in [k]$ has -- these properties were shown in~\cite{Adjiashvili17} for the bundle \LP, but they also apply to the \oflp. One of the key properties is that the pairs $(G^i,x^i)$ are \emph{$\beta$-simple}. A pair $(G^i,x^i)$ is called \emph{$\beta$-simple} for a $\beta \in \N$, if there exists a \emph{$\beta$-center} $v \in V[G^i]$ such that removal of $v$ decomposes $G^i$ into a forest of $t$ trees $K_1,\dots, K_t$ such that for all $j \in [t]$:
\begin{enumerate}
 \item the cost of links in $K_j$ is bounded by $\beta$:
  \begin{equation}\label{eq:beta1}
	 \sum_{\ell \in L, \ell \subseteq V[K_j]} c_\ell x_\ell \leq \beta, 
	\end{equation}
 \item $K_j$ has at most $\beta$ leaves.
\end{enumerate}
Notice that the trees $K_1,\dots,K_t$ contain neither the $\beta$-center $v$ nor edges incident to $v$.

\begin{restatable}[Section 3.1 in~\cite{Adjiashvili17}]{lemma}{LemmaDecomposition}\label{lemma:decomposition}
 Let $(G,L,c)$ be a \WTAP instance with $c_\ell \leq M$ for all $\ell \in L$ and a constant $M \in \R_{\geq 1}$, and let $x \in \Rnn^L$ be a feasible solution to the \oflps for $\gamma = \ceil{\tfrac{28M}{\eps^2}}$. Let $E^h$ be the heavily-covered edges of $G$ with respect to $x$, $L^h$ be a set of links that covers $E^h$, $\overline{G}$ be the graph obtained from $G$ by contracting $L^h$ and $(G^i,x^i)$, $i \in [k]$ the $\tfrac{4M}{\eps^2}$-thin edge decomposition of $\overline{G}$ and $x$. Let $E^s$ be the set of edges along which we have split in the $\tfrac{4M}{\eps^2}$-thin edge decomposition. Then
 \begin{enumerate}
	\item $x^i$ is a feasible solution to the odd-cut \LP for $G^i$, and it fulfills
   \begin{equation}
    \label{eq:bundle-decomp}
    \sum_{\ell \in \cov(B)} c_\ell x_\ell \geq \OPT(B) \quad \text{ for all } B \in \mathcal{B}_\gamma^{G^i}
   \end{equation}
	 where $\mathcal{B}_\gamma^{G^i}$ is the collection of all edge sets in $E[G^i]$ that are a $\gamma$-bundle in $G$.
	\item $\supp(x^i) \cap \supp(x^j) = \emptyset$ for $i,j \in [k], i \neq j$,
  \item every $(G^i,x^i)$, $i \in [k]$ is \emph{$\tfrac{10M}{\eps^2}$-simple} and we have 
   \begin{equation}\label{eq:beta2}
	  \sum_{\ell \in L, \ell \subseteq V[K_j]} c_\ell x_\ell \leq \frac{4M}{\eps^2} 
	 \end{equation}	
	 for the trees $K_j, j\in [t]$ that are created by removing a $\beta$-center $v$ from $G^i$.	
	\item $\sum_{i \in [k]} c^\tp x^i \leq \left(1+\eps\right)c^\tp x$,
  \item we can efficiently compute sets of links $L^h, L^s \subseteq L$ covering $E^h, E^s$, respectively, with 
	  $c(L^h) \leq \eps c^\tp x$ and $c(L^s) \leq O(\eps) \sum_{i\in [k]} c^\tp x^i$.
 \end{enumerate}
\end{restatable}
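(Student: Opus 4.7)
The plan is to observe that this statement is, as the attribution indicates, Section~3.1 of~\cite{Adjiashvili17} translated to our setting; items~2--5 depend only on the combinatorial structure of the decomposition (support disjointness, $\beta$-simplicity, cost blow-up controlled via the bundle constraints, and efficient covers of heavily-covered and split edges) and transfer verbatim because the \oflp is strictly stronger than the bundle \LP used there. The only genuinely new work is the odd-cut part of item~1: I must check that the odd-cut constraints of the \oddcutlps survive both the initial contraction of $L^h$ and the subsequent $\tfrac{4M}{\eps^2}$-thin edge splits. For contractions the argument is direct: for any $S \subseteq V[\overline{G}]$ with $|\delta_{\overline{G}}(S)|$ odd, I lift $S$ to $\tilde S \subseteq V[G]$ by placing each merged super-node entirely inside or outside $\tilde S$; contracted edges are absent from $\delta_G(\tilde S)$, so $\delta_G(\tilde S) = \delta_{\overline{G}}(S)$ and the multisets $\pi_G(\tilde S)$ and $\pi_{\overline{G}}(S)$ coincide, giving the desired inequality directly from feasibility of $x$.

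The heart of the proof is a single split along a thin edge $e = \set{u,v}$, producing $(G^u, x^u)$ and $(G^v, x^v)$; iterating handles the full decomposition. Given any $S \subseteq V[G^u]$ with $|\delta_{G^u}(S)|$ odd, I would lift to $G$ by setting $S' := S$ if $u \notin S$ and $S' := S \cup V[G^v]$ if $u \in S$. In both cases $e \notin \delta_G(S')$ and $\delta_G(S') = \delta_{G^u}(S)$, so the parity is preserved. The subtlety is with links $\ell' = pq$ crossing $e$ (say $p \in V[G^u]$, $q \in V[G^v]$): such $\ell'$ contribute to $\pi_G(S')$ but have $x^u_{\ell'} = 0$. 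However, by shadow-completeness the link $pu$ lies in $L$ with $c_{pu} \leq c_{\ell'}$, and the splitting rule increases $x^u_{pu}$ by $(c_{\ell'}/c_{pu}) x_{\ell'} \geq x_{\ell'}$. Since $P_{pu}^{G^u} = P_{\ell'}^G \cap E[G^u]$, the multiplicity of $pu$ in $\pi_{G^u}(S)$ equals that of $\ell'$ in $\pi_G(S')$; summing over all links yields $x^u(\pi_{G^u}(S)) \geq x(\pi_G(S'))$, and then feasibility of $x$ implies the odd-cut constraint for $S$ on $G^u$. A symmetric argument handles $G^v$.

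The remaining properties I would cite from~\cite{Adjiashvili17} without much further comment. The bundle inequality~\eqref{eq:bundle-decomp} holds because $\mathcal{B}_\gamma^{G^i}$ is by definition a subset of $\mathcal{B}_\gamma$ and the links covering a $B \in \mathcal{B}_\gamma^{G^i}$ lie entirely inside $G^i$ after the splits (the splitting rule never reroutes weight off of $B$ itself, and it only increases link weights within $G^i$). Items~2--5 are then purely combinatorial or quantitative: support disjointness in item~2 is immediate because crossing links are zeroed on both sides of a split; the $\tfrac{10M}{\eps^2}$-simplicity in item~3 and the cost bound $(1+\eps) c^\tp x$ in item~4 follow from the calibration $\gamma = \ceil{28M/\eps^2}$, which forces the bundle constraints to prevent repeated thin-edge splits from accumulating too much additional cost; and the covers $L^h, L^s$ in item~5 are built via standard shortest-path arguments on the tree using that the relevant edges are either heavily covered or lie at thin splits. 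The main obstacle I anticipate is making the lift-and-split argument for the odd-cut constraints rigorous across iterated splits, which I would handle by induction on the number of splits performed, using the single-split step above as the inductive step.
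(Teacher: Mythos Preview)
Your treatment of item~1 is correct and essentially matches the paper: the lift $S' := S$ (if $u \notin S$) or $S' := S \cup V[G^v]$ (if $u \in S$) is the clean way to make the odd-cut argument explicit, and the paper's direct calculation in Appendix~\ref{appendix:decomposition} amounts to the same thing (using that $\delta_{G^u}(S)$, viewed as an odd-size edge set in the parent tree, is the boundary of some vertex set there). Your induction on the number of splits is also how the paper proceeds.

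Your commentary on items~3 and~4, however, misidentifies the mechanism. Neither property uses bundle constraints or the calibration of~$\gamma$. Property~3 ($\beta$-simplicity) is purely a consequence of the termination condition: since no edge of $G^i$ is $\tfrac{4M}{\eps^2}$-thin, one orients each edge toward its heavy side and finds a sink $v$; removing $v$ leaves components of weight at most $\tfrac{4M}{\eps^2}$, and the leaf bound then uses only that surviving edges have coverage $< \tfrac{2}{\eps}$ (heavily-covered edges were contracted). Property~4 is a straight counting argument: each of the $k-1$ splits adds at most $\tfrac{2M}{\eps}$ to the total cost (split edges are not heavily covered), while each resulting piece carries cost at least $\tfrac{4M}{\eps^2}$ by thinness. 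The parameter $\gamma$ is calibrated for the \emph{rounding} step (Lemma~\ref{lemma:david}), not for the decomposition. Similarly, $L^h$ in item~5 is not a shortest-path construction: one scales $x$ by $\eps/2$ to obtain a feasible cut-\LP solution on the contracted instance and applies any $2$-approximation. Your overarching claim that items~2--5 transfer from~\cite{Adjiashvili17} is correct, but the reason is simply that those arguments use only the cut constraints and the combinatorics of the splitting procedure, not that the \oflp dominates the bundle \LP.
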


The proof of this lemma can be found in Appendix~\ref{appendix:decomposition}, and is based on~\cite[Section 3.1]{Adjiashvili17}.
Properties 1 and 2 allow us to round each $x^i$ individually, and property 3 makes it possible to employ a rounding algorithm from~\cite[Lemma 3.8]{Adjiashvili17}. Properties 4 and 5 ensure that the decomposition costs us only a factor of $(1+O(\eps))$, and that the edges not contained in a $G^i$ can also be covered at cost $O(\eps)\OPT$. Thus, the approximation guarantee is dominated by how well we can round the individual $x^i$ solutions. 

\section{Rounding the Solution\label{sec:rounding}}

Given a \WTAP instance $(G,L,c)$, the overall algorithm begins by solving the \oflps for $\gamma = \ceil{\tfrac{28M}{\eps^2}}$. It then computes a solution for $L^h$ for the heavily-covered edges $E^h$, contracts $L^h$ to obtain a new graph $\overline{G}$ and decomposes $\overline{G}$ and the solution $x$ into $\tfrac{10M}{\eps^2}$-simple pairs $(G^i,x^i), i \in [k]$, as described in Section~\ref{sec:decomposition}. In the process, it also computes a set of links $L^s$ covering $E^s$ as defined in Lemma~\ref{lemma:decomposition}.

Afterwards, we round each $(G^i,x^i)$ individually using the best of the following lemmas -- the first one performs well for pairs where cross-links carry much of the cost and is based on the integrality of the odd-cut \LP for instances with only cross- and up-links, while the second performs well for pairs where in-links carry much of the cost and is due to~\cite{Adjiashvili17}; it relies on the bundle constraints. The output of the algorithm is then the union of $L^h$ and $L^s$ as computed by Lemma~\ref{lemma:decomposition}, with the rounded solutions for all $(G^i,x^i)$.

\begin{lemma}
 \label{lemma:rounding}
 Let $x$ be a feasible solution to the odd-cut \LP for $G$. Let $G$ be rooted at any node $r \in V[G]$, then we can compute in polynomial time a solution $S \subseteq L$ covering $E[G]$ with cost at most
 \begin{equation}
  c(S) \leq 2c^\tp \xin + c^\tp \xcr.
 \end{equation} 
\end{lemma}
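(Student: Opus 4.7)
The plan is to reduce this rounding task to a \WTAP instance whose link set consists only of up- and cross-links, where Theorem~\ref{theorem:exact} asserts that the odd-cut \LP is integral. Root $G$ at $r$. For every in-link $\ell = uv$ that is not already an up-link, write $w := \lca(uv)$; then $w \notin \{u,v\}$ and $w \neq r$ (otherwise $\ell$ would be a cross-link), so by shadow-completeness the two up-link shadows $\ell_1 = uw$ and $\ell_2 = wv$ lie in $L$, each of cost at most $c_\ell$, with $P_{\ell_1}$ and $P_{\ell_2}$ partitioning $P_\ell$. I would build a fractional vector $x' \in \R^{L'}_{\geq 0}$ on the restricted link set $L' := \Lcr \cup \Lup \cup \{\ell_1, \ell_2 : \ell \in \Lin \setminus \Lup\} \subseteq L$ by keeping $x'_\ell = x_\ell$ for cross- and up-links, and, for every non-up-link in-link $\ell$ with $x_\ell > 0$, adding $x_\ell$ to each of $x'_{\ell_1}$ and $x'_{\ell_2}$ (not $x_\ell/2$---the doubling is what makes the reduction go through).

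Next I would check that $x'$ is feasible for the odd-cut \LP on $(G, L')$. Cross- and up-links contribute identically to $x'(\pi(S))$ and $x(\pi(S))$ for every $S \in \mathcal{S}$, while each split in-link $\ell$ contributes
\begin{equation*}
x_\ell\Bigl(\bigl\lceil \tfrac{1}{2} |P_{\ell_1} \cap \delta_G(S)|\bigr\rceil + \bigl\lceil \tfrac{1}{2} |P_{\ell_2} \cap \delta_G(S)|\bigr\rceil\Bigr) \; \geq \; x_\ell \bigl\lceil \tfrac{1}{2} |P_\ell \cap \delta_G(S)|\bigr\rceil,
\end{equation*}
by the elementary inequality $\lceil (a+b)/2\rceil \leq \lceil a/2\rceil + \lceil b/2\rceil$ combined with $P_\ell = P_{\ell_1} \sqcup P_{\ell_2}$. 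Summing over all links gives $x'(\pi(S)) \geq x(\pi(S)) \geq (|\delta_G(S)|+1)/2$, so $x'$ is feasible. Since $(G,L')$ contains only up- and cross-links, Theorem~\ref{theorem:exact} yields integrality of its odd-cut \LP, and the polynomial-time separation of $\{0,\tfrac{1}{2}\}$-Chv\'atal--Gomory cuts discussed earlier then produces in polynomial time an integral optimum $S^\star \subseteq L' \subseteq L$ with $c(S^\star) \leq c^\tp x'$; this $S^\star$ covers $E[G]$ because the per-edge covering inequalities are the special case of the odd-cut constraints with $|\delta_G(S)| = 1$.

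Finally I would bound $c^\tp x'$: cross-links contribute $c^\tp \xcr$, original up-links contribute $c^\tp \xup$, and each split in-link $\ell$ contributes $(c_{\ell_1}+c_{\ell_2})x_\ell \leq 2 c_\ell x_\ell$. Since $\Lup \subseteq \Lin$, these total at most $c^\tp \xup + 2\bigl(c^\tp \xin - c^\tp \xup\bigr) \leq 2 c^\tp \xin$, giving $c(S^\star) \leq 2 c^\tp \xin + c^\tp \xcr$ as required. The only delicate step is the odd-cut feasibility verification for $x'$, which reduces cleanly to the ceiling superadditivity above; every other part of the argument is bookkeeping.
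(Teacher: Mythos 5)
Your proposal is correct and follows essentially the same approach as the paper: split each non-up in-link $\ell = uv$ at $w = \lca(uv)$ into the two up-link shadows $uw$ and $wv$ with unreduced mass $x_\ell$ each, observe that the resulting solution is feasible for the odd-cut \LP on a link set containing only cross- and up-links, and then invoke Theorem~\ref{theorem:exact}. Your verification of odd-cut feasibility via the superadditivity $\lceil (a+b)/2 \rceil \leq \lceil a/2 \rceil + \lceil b/2 \rceil$ applied to $|P_\ell \cap \delta_G(S)| = |P_{\ell_1}\cap\delta_G(S)| + |P_{\ell_2}\cap\delta_G(S)|$ is in fact more careful than the paper's one-line remark that the per-edge coverage is unchanged, which by itself does not immediately account for the nonlinear $\lceil\cdot\rceil$ coefficients in $\pi(S)$.
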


\begin{proof}
 We define a new feasible solution for the odd-cut \LP for $G$ by the applying to the following modification to all in-links with mass that are not up-links, \ie all $\ell \in \supp(x) \cap (\Lin \setminus \Lup)$. Let $\ell = uv$ and $w := \lca(uv)$. Then we increase $x_{uw}$ and $x_{wv}$ by $x_\ell$, and set $x_\ell := 0$, and we call the resulting solution $y$. 

 By construction, we have $\supp(x) \cap (\Lin \setminus \Lup) = \emptyset$ -- that means that the support of $y$ contains only up- and cross-links. At the same time, we know that $c^\tp y \leq 2c^\tp \xin + c^\tp\xcr$ since we doubled the mass of in-links. Notice finally that $y$ is feasible for the odd-cut \LP, since the total mass of links covering each edge did not change by the modification.

 Now resolve the odd-cut \LP with the set of links restricted to $\supp(y)$. We know by Theorem~\ref{theorem:exact} that the odd-cut \LP is integer for this set of links, allowing us to obtain an integral solution $z$ with $c^\tp z \leq c^\tp y \leq 2c^\tp \xin + c^\tp\xcr$.
\end{proof}

The idea of the rounding algorithm from~\cite{Adjiashvili17} is as follows. Given a $\tfrac{10M}{\eps^2}$-simple pair $(H,x)$ with a $\beta$-center $r$, we replace all cross-links $\ell =uv$ by two up-links $ur$ and $rv$. After that, we split $H$ at $r$ into trees $H_1,\dots, H_m$ such that $r$ is part of all trees. Since we no longer have cross-links, we can split $x$ into feasible solutions $x^1, \dots, x^m$ for the bundle \LP for $H_1,\dots,H_m$ with $\supp(x^i) \cap \supp(x^j) = \emptyset$ if $i \neq j$. Every sub-tree $H_1, \dots, H_m$ of $H$ has at most $\tfrac{10M}{\eps^2}+1$ leaves (the root causes the additional $+1$); it can then be shown that every $H_j$ was a $\ceil{\tfrac{28M}{\eps^2}}$-bundle in the original graph, allowing us to use a bundle constraint to bound the rounding cost.

\begin{restatable}[Lemma 3.8 in~\cite{Adjiashvili17}]{lemma}{RoundingDavid}\label{lemma:david}
 Let $x$ be a feasible solution to the \oflps for $G$, with $\gamma = \ceil{\tfrac{28M}{\eps^2}}$.
 Given a $\frac{10M}{\eps^2}$-simple pair $(H,x)$ from the $\tfrac{4M}{\eps^2}$-thin decomposition of $G$ and $x$ that is rooted at a  $\frac{10M}{\eps^2}$-center $r$, we can compute in polynomial time a set of links $S \subseteq L$ covering $E[H]$ with cost at most
 \begin{equation}
  c(S) \leq c^\tp \xin + 2c^\tp \xcr + |V^h \cap V[H]|
 \end{equation}
 with $V^h$ being the set of nodes created by the contractions of the links in $L^h$.
\end{restatable}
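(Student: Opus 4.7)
The plan is to follow the approach sketched immediately before the statement, broken into a sequence of concrete steps. First I would \emph{prepare the structure around the center}: root $H$ at $r$ and remove $r$, decomposing $H$ into sub-trees $K_1,\dots,K_t$, each with at most $\tfrac{10M}{\eps^2}$ leaves since $(H,x)$ is $\tfrac{10M}{\eps^2}$-simple. Let $H_j$ be the subgraph of $H$ consisting of $K_j$, the root $r$, and the edge of $H$ connecting $r$ to $K_j$; then each $H_j$ has at most $\tfrac{10M}{\eps^2}+1$ leaves in $H$, and the edge sets $E[H_1],\dots,E[H_t]$ partition $E[H]$.

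Second I would \emph{rewrite the LP solution using only up-links}. For each cross-link $\ell=uv$ in $\supp(x)$ (so $r=\lca(uv)$), move the mass $x_\ell$ onto the two up-links $ur$ and $rv$, which exist by shadow-completeness with $c_{ur},c_{rv}\leq c_\ell$. The resulting vector $y$ still covers $E[H]$, is supported entirely on up-links, and satisfies $c^\tp y \leq c^\tp \xin + 2c^\tp \xcr$. Because every $\ell\in\supp(y)$ is an up-link contained in a single $H_j$, restricting $y$ to links of $H_j$ yields vectors $x^j$ with pairwise disjoint support that fractionally cover $E[H_j]$ and satisfy $\sum_j c^\tp x^j = c^\tp y$. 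Equivalently, and directly in terms of the original $x$: every cross-link is covered inside exactly two of the $H_j$'s (one per side of $r$) and every in-link in exactly one, so $\sum_j \sum_{\ell\in\cov(H_j)} c_\ell x_\ell = c^\tp \xin + 2c^\tp \xcr$.

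The heart of the argument is to invoke a bundle constraint for each $H_j$. I would prove that the edge set $H_j\subseteq E[H]\subseteq E[G]$ is a $\gamma$-bundle in the original graph $G$ for $\gamma=\lceil 28M/\eps^2\rceil$: in $H$ the $\tfrac{10M}{\eps^2}+1$ leaves of $H_j$ decompose it into $O(M/\eps^2)$ paths, and each super-vertex in $V^h\cap V[H_j]$ can only contribute a bounded increase in the number of paths required when the decomposition is lifted from $H$ to $G$, which is where the slack between $\tfrac{10M}{\eps^2}+1$ and $\gamma$ is used. Once this is established, Property~1 of Lemma~\ref{lemma:decomposition} yields $\sum_{\ell\in\cov(H_j)} c_\ell x_\ell \geq \OPT(H_j)$, and Lemma~\ref{lemma:constantleaves} (applied to the tree obtained by contracting $E[H]\setminus H_j$) produces in polynomial time an integer cover $S_j$ with $c(S_j)=\OPT(H_j)$. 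Summing gives $\sum_j c(S_j)\leq c^\tp \xin + 2c^\tp \xcr$, and a final additive slack of at most $|V^h\cap V[H]|$ pays for any extra links (each of cost $\geq 1$) needed to patch up coverage near the super-vertices not already captured by the bundle lifting. The main obstacle is establishing the $\gamma$-bundle claim: one has to track precisely how contracting a heavy link --- which corresponds to collapsing an entire subtree of $G$ into a single super-vertex --- can increase the number of paths needed when lifting a path decomposition of $H_j$ from $H$ back to $G$, and argue that the choice $\gamma=\lceil 28M/\eps^2\rceil$ is large enough to absorb this increase for every pair $(H,x)$ produced by the decomposition of Lemma~\ref{lemma:decomposition}.
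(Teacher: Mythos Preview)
Your overall structure matches the paper's: split off the sub-trees $\overline{H}_j$ around the center $r$, push cross-link mass onto up-link shadows to get $y$ with $c^\tp y \le c^\tp\xin + 2c^\tp\xcr$, and then try to round each $\overline{H}_j$ optimally via a bundle constraint. The gap is in the last step, and it is exactly the step you flag as the ``main obstacle.''

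You assert that every $H_j$ is a $\gamma$-bundle in $G$, arguing that each super-vertex in $V^h\cap V[H_j]$ contributes only a bounded increase to the number of paths. The per-vertex increase is indeed bounded (at most $+1$), but there is \emph{no} a~priori bound on $|V^h\cap V[H_j]|$: the decomposition of Lemma~\ref{lemma:decomposition} controls the number of leaves and the local LP mass in each $K_j$, not the number of heavy-contracted vertices it contains. If $|V^h\cap V[H_j]|$ is large, the number of paths needed to cover $H_j$ in $G$ can exceed $\gamma=\lceil 28M/\eps^2\rceil$, and no bundle constraint is available. Your proposed use of the additive $|V^h\cap V[H]|$ term (``patch up coverage near the super-vertices'') does not address this---coverage is not the issue; the missing inequality $\OPT(H_j)\le c^\tp x^j$ is.

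The paper handles this with a case distinction on each $\overline{H}_j$. If $|V^h\cap V[\overline{H}_j]|<\tfrac{6M}{\eps^2}$, then the path count stays below $\gamma$ and the bundle constraint yields $\OPT(E[\overline{H}_j])\le c^\tp z$ directly. If $|V^h\cap V[\overline{H}_j]|\ge\tfrac{6M}{\eps^2}$, one abandons the bundle constraint and instead uses the integrality gap of $2$ for the cut \LP: since $c^\tp z\le\tfrac{6M}{\eps^2}$ (by $\beta$-simplicity and the bound on mass across the root edge), we get $\OPT(E[\overline{H}_j])-c^\tp z\le c^\tp z\le\tfrac{6M}{\eps^2}\le|V^h\cap V[\overline{H}_j]|$. \emph{This} is where the additive $|V^h\cap V[H]|$ term enters the bound---it absorbs the factor-$2$ loss in the sub-trees that have too many super-vertices to be bundles, not any ``patching'' of coverage.
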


For completeness, we give a proof of this lemma in Appendix~\ref{appendix:rounding}. Together, these two lemmas allow us to proof the following theorem.

\OnePointHalf*

\begin{proof}
 For constant $\gamma$, the \oflps can be solved in polynomial time due to Lemma~\ref{lem:solving_main_lp}. The decomposition can be performed in polynomial time, as can the computation of $L^h$ and $L^s$ in Lemma~\ref{lemma:decomposition} and the rounding schemes of Lemmas~\ref{lemma:rounding} and \ref{lemma:david}. Since we cover all $E[G^i]$ as well as $E^h$ and $E^s$, we produce a feasible solution due to Observation~\ref{obs:feasibility}.

 What is left to analyze the approximation ratio of the algorithm. By Lemma~\ref{lemma:decomposition}, we know that $c(L^h \cup L^s) \leq O(\eps) \cdot \OPT$. For a pair $(H,x)$ in the decomposition, we know by Lemmas~\ref{lemma:rounding} and~\ref{lemma:david} that the cost for its covering $S$ can be bounded by
\[
 \begin{aligned}
  c(S) &\leq \min(2c^\tp \xin + c^\tp \xcr,c^\tp \xin + 2c^\tp \xcr + |V^h \cap V[H]|)\\
     	 &\leq \frac{1}{2}\left(2c^\tp \xin + c^\tp \xcr + c^\tp \xin + 2c^\tp \xcr + |V^h \cap V[H]|\right)\\
			 &\leq \frac{3}{2} c^\tp \xin + \frac{3}{2} c^\tp \xcr + \frac{1}{2} |V^h \cap V[H]|.\\
 \end{aligned}				
\]
Thus, the cost of the computed coverings $S_1,\dots,S_k$ for $(G^1,x^1), \dots, (G^k,x^k)$ is bounded by
\[
 \begin{aligned}
  \sum_{i \in [k]} c(S_i) \leq |V^h| + \sum_{i \in [k]} \frac{3}{2}c^\tp x^i = |V^h| + \frac{3}{2} c^\tp x.
 \end{aligned}				
\]
Since we have $|V^h| \leq |L^h| \leq c(L^h) \in O(\eps) \OPT$, the claim follows (every link costs at least $1$, and we need at least one link for a node in $V^h$). 
\end{proof}

\paragraph*{Acknowledgements.}

We thank Nikos Mutsanas for early discussions on the odd-cut \LP, L\'aszl\'o V\'egh and Giacomo Zambelli for their feedback.

\bibliography{literature}

\appendix
\section{Proof of Lemma~\ref{lemma:decomposition}\label{appendix:decomposition}}

In this section, we prove that the properties of the decomposition in Lemma~\ref{lemma:decomposition} also hold if we use the \oflp as a basis for the decomposition instead of the bundle \LP. The proof ideas are all due to~\cite{Adjiashvili17}. Notice that we use $\gamma = \ceil{\frac{28M}{\eps^2}}$ and $\beta = \frac{10M}{\eps^2}$ instead of $\gamma = \ceil{\frac{56M}{\eps^2}}$ and $\beta = \frac{12M}{\eps^2}$ in~\cite{Adjiashvili17}; this is the result of slightly improved estimations. In particular, we use that:
\begin{itemize}
 \item Links covering an edge $e = \set{u,v}$ can cover at most one leaf in the subtrees $G^u$ and $G^v$. This argument allows us to reduce $\beta$ from $\frac{12M}{\eps^2}$ to $\frac{10M}{\eps^2}$.
 \item The cost of links completely in a sub-tree of a $\tfrac{10M}{\eps^2}$-simple pair is actually bounded by $\tfrac{4M}{\eps^2}$, even though the pair might not be $\tfrac{4M}{\eps^2}$-simple due to the number of leaves of the sub-trees.
 \item The previous point allows us bound $|V^h|$ by $\tfrac{6M}{\eps^2}$; together with the improved $\beta = \frac{10M}{\eps^2}$ we obtain $\gamma = \ceil{\frac{28M}{\eps^2}}$.
\end{itemize}
Details can be found in the proof below.

\LemmaDecomposition*

\begin{proof} We now prove the five properties in the order in which they were stated.

\begin{enumerate}


 \item {\bf Feasibility}: We start with a solution $x$ that is feasible for the \oflp for $G$, then we contract the links in $L^h$ to obtain a new graph $\overline{G}$. Since we do not modify the solution during the contraction, $x$ fulfills \eqref{eq:bundle-decomp} for $B \in \mathcal{B}^{\overline{G}}_\gamma$ by definition of $\mathcal{B}^{\overline{G}}_\gamma$, because $\mathcal{B}_\gamma^{G'}$ is the set of edges in $G'$ that are $\gamma$-bundles in $G$.

 The odd-cut constraints for $E[\overline{G}]$ are
 \begin{equation*}
 \sum_{\ell \in L} \ceil{\tfrac{1}{2}|P_\ell \cap \delta_{\overline{G}}(S)|} x_\ell \geq \frac{|\delta_{\overline{G}}(S)|+1}{2} \qquad \text{for all } S \in \mathcal{S}_{\overline{G}}
\end{equation*}
with $\mathcal{S}_{\overline{G}}$ being the collection of all sets $S$ such that $|\delta_{\overline{G}}(S)|$ is odd. Let $\mathcal{S}_{G}$ being the collection of all sets $S$ such that $|\delta_{G}(S)|$ is odd. 
For a set $S \in \mathcal{S}_{\overline{G}}$, let $S' \in \mathcal{S}_{G}$ be the set with $\delta_{\overline{G}}(S) = \delta_G(S')$. The existence of $S'$ is guaranteed since $\overline{G}$ is a contraction of the tree $G$. Since we do not modify the solution during the contraction, the existence of $S'$ for every $S$ guarantees that the odd-cut constraints are fulfilled.

Now assume that we have a graph $H$ and a solution $x$ which is feasible for the \oflp for $G$, and fulfills \eqref{eq:bundle-decomp} for all $B \in \mathcal{B}^H_\gamma$. We now show that splitting an edge $e = \set{u,v} \in E[H]$ results in two pairs $(H^u,x^u)$, $(H^v,x^v)$ such that $x^u$, $x^v$ are feasible for the \oflp for $H^u$, $H^v$ and fulfill \eqref{eq:bundle-decomp} for all $B \in \mathcal{B}^{H^u}_\gamma$ and $B \in \mathcal{B}^{H^v}_\gamma$, respectively.

Recall that splitting an edge $e = \set{u,v}$ results in $x^u$ (and analogously $x^v$) in the following way:
\[
 x^u_{\ell=pq}\ := \begin{cases} 
                     x_\ell                                      & \text{if } p,q \in V[G^u] \setminus\set{u},\\ 
										 0                                           & \text{if } p \in V[G^v]\text{ or }q \in V[G^v], \\ 
										 x_\ell + \displaystyle\sum\limits_{\substack{\ell' \in \cov(e)\\p\in\ell'}} \frac{c_{\ell'}}{c_\ell} x_{\ell'} & \text{if } p \in V[G^u], q = u,
										\end{cases}  \quad \text{for all } \ell \in L.
\]
This definition differs from~\cite{Adjiashvili17} in the additional $\frac{c_{\ell'}}{c_\ell}$ term that ensures that bundle constraints remain feasible in the case that the shadows to which mass is shifted to are cheaper than the links where the mass originates from.
Due to this, we have that for every edge $e \in E[G^u]$
\[ \sum_{\ell \in \cov(e)} x^u_\ell \geq \sum_{\ell \in \cov(e)} x_\ell, \quad \text{and } \sum_{\ell \in \cov(e)} c_\ell x^u_\ell \geq \sum_{\ell \in \cov(e)} c_\ell x_\ell \]
 by definition of the splitting; the same applies to any edge set $F \subseteq E[G^u]$:
\[  \sum_{\ell \in \cov(F)} x^u_\ell \geq \sum_{\ell \in \cov(F)} x_\ell, \quad \text{and}\quad  \sum_{\ell \in \cov(F)} c_\ell x^u_\ell \geq \sum_{\ell \in \cov(F)} c_\ell x_\ell. \]
This implies that the left hand sides of \eqref{eq:bundle-decomp} can only get larger in $G^u$, while the right hand sides stay the same. 

For the odd-cut constraints, consider a set $S \subseteq V[G^u]$ such that $|\delta_{G^u}(S)|$ is odd, and let $\mathcal{S}_{G^u}$ be the collection of these sets. The important fact is here that the mass of links covering an edge in $E[G^u]$ does not decrease. We define $\alpha_{\ell,S} := \ceil{\tfrac{1}{2}|P_\ell \cap \delta_{G^u}(S)|}$ for all $\ell \in L, S \in \mathcal{S}_{G^u}$, and note that $\alpha_{\ell,S} = \alpha_{\ell',S}$ for $\ell = pu \in L$, $\ell' = pq \in \cov(e)$ yielding
 \[\begin{aligned}
  &\  \sum_{\ell \in L} \ceil{\tfrac{1}{2}|P_\ell \cap \delta_{G^u}(S)|} x^u_\ell \\[-4mm]
 =&\  \sum_{\ell \in L, \ell \subseteq V[G^u] \setminus\set{u}} \alpha_{\ell,S}\cdot x_\ell + \sum_{\ell = pu \in L} \alpha_{\ell,S} \left(x_\ell+\sum_{\substack{\ell' \in\cov(e)\\p\in\ell'}} \frac{c_{\ell'}}{c_\ell}x_{\ell'} \right) + \sum_{\ell \in L, \ell \in \cov(e)} \alpha_{\ell,S}  \cdot 0\\
 \geq&\  \sum_{\ell \in L, \ell \subseteq V[G^u] \setminus\set{u}} \alpha_{\ell,S}\cdot x_\ell + \sum_{\ell = pu \in L} \alpha_{\ell,S} x_\ell + \sum_{\ell = pu \in L} \sum_{\substack{\ell' \in\cov(e)\\p\in\ell'}} \alpha_{\ell',S}\cdot x_{\ell'}\\
 \geq&\  \sum_{\ell \in L, \ell \subseteq V[G^u] \setminus\set{u}} \alpha_{\ell,S}\cdot x_\ell + \sum_{\ell = pu \in L} \alpha_{\ell,S} x_\ell + \sum_{p \in V[G^u]} \sum_{\ell = pq \in\cov(e)} \alpha_{\ell,S}\cdot x_{\ell}\\
 \geq&\  \sum_{\ell \in L, \ell \subseteq V[G^u] \setminus\set{u}} \alpha_{\ell,S}\cdot x_\ell + \sum_{\ell = pu \in L} \alpha_{\ell,S} x_\ell + \sum_{\ell \in L, \ell \in \cov(e)} \alpha_{\ell,S}\cdot x_{\ell}\\
\geq&\ \frac{|\delta_{G^u}(S)|+1}{2} \qquad \text{for all } S \in \mathcal{S}_{G^u}.
\end{aligned}
\]
This proves adherence to the odd-cut constraints.
Analogous statements apply to $G^v$ and $x^v$. 
This argumentation can now be used inductively on the splitting process -- this completes the proof of this property.

  
 \item {\bf Disjointness}: By definition of the edge splitting, we know that $V[G^i]$ and $V[G^j]$ are disjoint for all $i \neq j, i,j \in [k]$. Furthermore, for any link $\ell \in \supp(x^i)$, we have $\ell \subseteq V[G^i]$, proving this claim. 
 

 \item {\bf Simplicity}: Here, we have to show that every $(G^i,x^i)$ is $\tfrac{10M}{\eps^2}$-simple, which requires that there is a node $v$ for every $G^i$ whose removal decomposes $G^i$ into trees $K_1,\dots,K_t$ with two properties: 
  \begin{itemize}
	 \item $K_j$ fulfills $\sum_{\ell \in L, \ell \subseteq V[K_j]} c_\ell x_\ell \leq \frac{10M}{\eps^2}$ for all $j \in [t]$,
	 \item $K_j$ has at most $\frac{10M}{\eps^2}$ leaves for all $j \in [t]$.
  \end{itemize}
	Furthermore, we have to prove that
   \begin{equation*}
	  \sum_{\ell \in L, \ell \subseteq V[K_j]} c_\ell x_\ell \leq \frac{4M}{\eps^2} \quad \text{for all } j \in [t].
	 \end{equation*}	
	 Notice	that this implies the first property of $(G^i,x^i)$ being $\tfrac{10M}{\eps^2}$-simple.
	
  Let $(H,x) := (G^i,x^i)$ for some $i \in [k]$. We know that every edge $e = \set{u,v} \in E[H]$ satisfies 
  \begin{equation}\label{eq:alphaviol}  \sum_{\ell \in L, \ell \subseteq V[H^u]} c_\ell x_\ell < \frac{4M}{\eps^2} \quad \text{ and / or } \quad \sum_{\ell \in L, \ell \subseteq V[H^v]} c_\ell x_\ell   < \frac{4M}{\eps^2}. \end{equation}
  We distinguish two cases:
  \begin{enumerate}
   \item In this case, there exists an edge $e = \set{u,v} \in E[H]$ with:
	  \begin{equation*}\label{eq:decompcase1}
     \sum_{\ell \in L, \ell \subseteq V[H^u]} c_\ell x_\ell < \frac{4M}{\eps^2} \quad \text{  and  }\quad  \sum_{\ell \in L, \ell \subseteq V[H^v]} c_\ell x_\ell < \frac{4M}{\eps^2}.
    \end{equation*}
	  Thus, choosing $u$ or $v$ decomposes $H$ into subtrees $K_1,\dots,K_t$ with 
	  \begin{equation*}\label{eq:decompcase1b}
     \sum_{\ell \in L, \ell \subseteq V[K_i]} c_\ell x_\ell < \frac{4M}{\eps^2}
    \end{equation*}
	  for all $i \in [t]$.
	 \item Otherwise, we have for all edges $e = \set{u,v} \in E[H]$, that 
	  \begin{equation}
		 \text{either } \sum_{\ell \in L, \ell \subseteq V[H^u]} c_\ell x_\ell < \frac{4M}{\eps^2} \quad \text{  or } \quad \sum_{\ell \in L, \ell \subseteq V[H^v]} c_\ell x_\ell   < \frac{4M}{\eps^2}. 
		\end{equation}
	  We orient every edge $e = \set{u,v}$ based on whether $\sum_{\ell \in L, \ell \subseteq V[H^u]} c_\ell x_\ell < \frac{4M}{\eps^2}$ holds: if yes, we orient it from $u$ to $v$, otherwise from $v$ to $u$. Since $H$ is a tree, there is a node $v \in V[H]$ with $\outdeg(v) = 0$. Removing $v$ again decomposes $H$ into sub-trees $K_1,\dots,K_t$ with 
	 \begin{equation*}\label{eq:decompcase1c}
    \sum_{\ell \in L, \ell \subseteq V[K_i]} c_\ell x_\ell < \frac{4M}{\eps^2}
   \end{equation*}
	 for all $i \in [t]$.
 \end{enumerate}
 This completes the first part of the proof. Now we bound the number of leaves for a tree $K_j, j \in [t]$. 

 Now consider a tree $K_j$, $j \in [t]$, and let $e$ be the edge that connected $K_j$ to the $\beta$-center in $G^i$ that was removed. The number of leaves that a tree $K_j$ has can be bounded by the following argument. Every link $\ell = uv$ has cost at least 1 and can contribute to the covering of at most 2 leaf edges of $K_j$ if $u,v \in V[K_j]$; if either $u$ or $v$ is not in $V[K_j]$ then $\ell$ can cover at most one of $K_j$ leaf edge. Finally, if both $u,v \not\in V[K_j]$ it cannot cover leaf edges of $K_j$ at all. 

 We get the following bound for the number of leaves $|\leaves(K_j)|$ of $K_j$ (since $e \not\in E^h$):
 \[ |\leaves(K_j)| \leq 2\cdot\sum_{\ell \in L, \ell \subseteq V[K_j]} x_\ell + \sum_{\ell \in \cov(e)} x_\ell \leq 2\cdot\frac{4M}{\eps^2} + \frac{2}{\eps} \leq \frac{10M}{\eps^2}. \]
 This completes the proof of this property.


 \item {\bf Cost Increase}: We have to show that 
  \[  \sum_{i \in [k]} c^\tp x^i \leq \left(1+\eps\right)c^\tp x. \]
	Consider an edge-splitting of a pair $(H,x) := (G^i,x^i)$ along an edge $e = \set{u,v}$ into $H^u, H^v, x^u$ and $x^v$. We have $c^\tp x \leq c^\tp x^u + c^\tp x^v + \sum_{\ell \in \cov(e)} c_\ell x_\ell$ by definition of the edge-splitting. Thus, each splitting increases the cost by $\sum_{\ell \in \cov(e)} c_\ell x_\ell < \tfrac{2}{\eps}M$ due to our link cost bound $M$ and the fact that all edges with a mass of more than $\tfrac{2}{\eps}$ were contracted. Since we end up with $k$ pairs, we have $k-1$ splits and the total increase in cost is bounded by $(k-1)\tfrac{2M}{\eps}$:
 \[
   \sum_{i\in [k]} c^\tp x^i  \leq c^\tp x + (k-1)\frac{2M}{\eps} \leq c^\tp x + k\frac{2M}{\eps}.
 \]
 On the other hand, we split along $\tfrac{4M}{\eps^2}$-thin edges and have therefore
 \[
  c^\tp x^i = \sum_{\ell \in L, \ell \subseteq V[G^i]} c_\ell x_\ell \geq \frac{4M}{\eps^2} \quad \text{ for all } i \in [k].
 \]
 Combined, we now have for $\eps \leq 1$:
 \[
  \begin{aligned}
   \sum_{i\in [k]} c^\tp x^i 
	  \leq c^\tp x + k\frac{2M}{\eps} 
	 &= \left(1+\frac{k\frac{2M}{\eps}}{c^\tp x}\right) c^\tp x\\
	 &\leq \left(1+\frac{k\frac{2M}{\eps}}{ \sum_{i\in [k]} c^\tp x^i-k\frac{2M}{\eps}}\right) c^\tp x\\
	 &\leq \left(1+\frac{k\frac{2M}{\eps}}{k\frac{4M}{\eps^2}-k\frac{2M}{\eps}}\right) c^\tp x\\
	 &= \left(1+\frac{1}{\frac{2}{\eps}-1}\right) c^\tp x = \left(1+\frac{\eps}{2-\eps}\right) c^\tp x \leq \left(1+\eps\right) c^\tp x.\\
	\end{aligned}
 \]
 That completes this part of the proof.


 \item {\bf Remaining Edges}: We need to show the existence of link sets $L^h, L^s$ that cover $E^h$ and $E^s$, respectively, with $c(L^h) \leq \eps c^\tp x$ and $c(L^s) \leq O(\eps) \sum_{i\in [k]} c^\tp x^i$.

  Consider the graph $G'$ obtained by contracting all edges in $E[G] \setminus E^h$. Since $E^h := \setc{e \in E[G]}{x(\cov(e)) \geq \frac{2}{\eps}}$ we know that $y := \tfrac{\eps}{2}x$ is a feasible solution to the cut \LP for $G'$. The cut \LP is known to have an \LP gap of at most 2 due to various 2-approximation algorithms, \eg~\cite{GoemansEtAl94,Jain01} or \cite[Proposition 2.1]{Adjiashvili17}; applying this to $y$ yields  a set of links $L^h$ with 
	\[ c(L^h) \leq 2 c^\tp y = 2\tfrac{\eps}{2} c^\tp x = \eps c^\tp x. \] 
	
	This shows the first part of the claim. For the second part, remember that all edges $e \in E^s$ used for splitting were $\tfrac{4M}{\eps^2}$-thin edges, \ie 
	\[ c^\tp x^i = \sum_{\ell \in L, \ell \subseteq V[G^i]} c_\ell x_\ell \geq \frac{4M}{\eps^2} \quad \text{for all } i\in [k]. \]
	Since $|E^s| = k-1$, any inclusion-wise minimal covering of $E^s$ costs at most $(k-1)M$ and we have
  \[
   (k-1)M = \frac{(k-1)M}{\sum_{i\in [k]} c^\tp x^i}\sum_{i\in [k]} c^\tp x^i \leq \frac{(k-1)M}{k\tfrac{4M}{\eps^2}}\sum_{i\in [k]} c^\tp x^i \leq \frac{\eps^2}{4}\sum_{i\in [k]} c^\tp x^i \in O(\eps) \sum_{i\in [k]} c^\tp x^i .
  \]
 \end{enumerate}
\end{proof}

\section{Proof of Lemma~\ref{lemma:david}\label{appendix:rounding}}

\RoundingDavid*

\begin{proof}
 Consider the following rounding procedure. Given $(H,x)$, let $r$ be a $\tfrac{10M}{\eps^2}$-center of $(H,x)$. Let $H_1,\dots,H_m$ be the sub-trees of $H$ created by removing $r$. For a sub-tree $H_i$, let $e_i$ be the edge that connects $H_i$ to $r$ in $H$. We define new sub-trees $\overline{H}_i, i \in [m]$ by adding $e_i$ and $r$ to $H_i$. We now create a new solution $y$ that contains no cross-links by splitting all cross-links at $r$ into two up-link shadows by defining:
 \begin{equation*}
  y_{\ell=uv} := \begin{cases} x_\ell & \ell \in \supp(\xin), r \not\in \ell\\ x_\ell + \sum_{\substack{\ell' \in \supp(\xcr)\\v \in \ell'}} x_{\ell'} & u=r,v\in V[H]\setminus\set{r} \\ 0 & \text{else} \end{cases} \quad \text{ for all } \ell \in L.
 \end{equation*}
 Notice that we have $c^\tp y \leq c^\tp\xin + 2c^\tp \xcr$.
 Since $y$ contains no cross-links, it is the union of $m$ disjoint solutions, one for each $\overline{H}_i, i \in [m]$. We will now focus on rounding the solution $z$ for a tree $\overline{H} \in \setc{\overline{H}_i}{ i \in [m]}$ with $z$ being defined by
 \begin{equation*}
  z_{\ell} := \begin{cases} y_\ell & \ell \subseteq V[\overline{H}]\\ 0 & \text{else} \end{cases} \quad \text{ for all } \ell \in L.
 \end{equation*}
 We now make the following claim.

 \begin{claim}\label{cl:compound}
  If $z$ is a feasible solution to the bundle \LP 
  \begin{equation*}
	 \OPT(E[\overline{H}]) \leq c^\tp z + |V^h \cap V[\overline{H}]|.
	\end{equation*}
 \end{claim}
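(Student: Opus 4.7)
The plan is to realize $E[\overline{H}]$ as (essentially) a $\gamma$-bundle in $G$, apply the bundle constraint satisfied by $z$, and then absorb the integrality loss incurred by the $V^h$-vertices into the additive term $|V^h \cap V[\overline{H}]|$.

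First, I would control the leaf count of $\overline{H}$. Since $(H,x)$ is $\tfrac{10M}{\eps^2}$-simple with $\tfrac{10M}{\eps^2}$-center $r$, each sub-tree $H_i$ obtained by removing $r$ has at most $\tfrac{10M}{\eps^2}$ leaves, so $\overline{H} = H_i \cup \{r,e_i\}$ has at most $\tfrac{10M}{\eps^2}+1$ leaves. This is the key ``simplicity dividend'' that will make the bundle argument go through since $\tfrac{10M}{\eps^2}+1 \leq 2\gamma$.

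Second, I would lift $E[\overline{H}]$ back to $G$. The natural preimage is in general a forest in $G$, because each $v \in V^h \cap V[\overline{H}]$ is the image under contraction of an entire subtree $T_v \subseteq G$. To turn this forest into a subtree $\widetilde B \subseteq G$, fill in, for each such $v$, the minimal Steiner subtree $T'_v \subseteq T_v$ connecting the attachment points of the $\overline{H}$-edges incident to $v$. Every attachment point then has at least two incident edges in $\widetilde B$ (one from $E[\overline{H}]$ plus at least one from $T'_v$, except when $v$ is a leaf of $\overline{H}$, in which case $T'_v$ is trivial and the attachment point becomes a leaf). Consequently the leaves of $\widetilde B$ are in bijection with the leaves of $\overline{H}$, so $\widetilde B$ has at most $\tfrac{10M}{\eps^2}+1$ leaves and decomposes into at most $\ceil{(\tfrac{10M}{\eps^2}+1)/2} \leq \gamma$ paths of $G$, proving $\widetilde B$ is a $\gamma$-bundle in $G$.

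Third, I would invoke the bundle constraint on $z$ applied to $\widetilde B$: this yields $\OPT(\widetilde B) \leq \sum_{\ell \in \cov(\widetilde B)} c_\ell z_\ell \leq c^\tp z$, and because $E[\overline{H}] \subseteq \widetilde B$, any integer link-cover of $\widetilde B$ automatically covers $E[\overline{H}]$. To recover the claim with the $+|V^h \cap V[\overline{H}]|$ term, I would charge one auxiliary unit per $v \in V^h \cap V[\overline{H}]$ for covering the $T'_v$-edges that are in $\widetilde B$ but not in $E[\overline{H}]$, leveraging that each such $v$ was created by at least one link of $L^h$ whose covered path already passes through $T_v$.

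I expect the main obstacle to be the last bookkeeping step, namely passing from ``$\widetilde B$ as a bundle in $G$'' to ``a bundle of $\overline{H}$ that the LP of $z$ directly sees'', and showing that one unit per $V^h$-vertex is indeed sufficient to absorb the overhead. This likely relies both on shadow-completeness (so that the auxiliary cover can be assembled from links actually present in $L$) and on the fact that each $V^h$-vertex corresponds to at least one link of $L^h$, which has cost at least one but is accounted for elsewhere in the global analysis.
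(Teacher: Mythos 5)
Your approach diverges substantially from the paper's, and it has a gap that I do not see how to repair. You lift $E[\overline{H}]$ to a connected subtree $\widetilde B\subseteq E[G]$ by inserting, inside each contracted component $T_v$, a Steiner tree $T'_v$ joining the attachment points, and then you want to invoke the bundle constraint for $\widetilde B$. There are two problems. First, after the decomposition the solution only inherits bundle constraints for sets in $\mathcal{B}_\gamma^{G^i}$, that is, bundles whose edge set lies entirely inside $E[G^i]$; your $\widetilde B$ contains the $T'_v$-edges, which are \emph{not} in $E[G^i]$, so neither $x^i$ nor $z$ is known to satisfy the constraint for $\widetilde B$. Second, even if you reach back to the original global $x$ and use $\sum_{\ell \in \cov(\widetilde B)} c_\ell x_\ell \geq \OPT(\widetilde B) \geq \OPT(E[\overline{H}])$, the left-hand side contains all the mass on links that touch only the $T'_v$-edges, and those edges were contracted precisely because they are heavily covered, with $x(\cov(e)) \geq 2/\eps$. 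So the excess $\sum_{\ell \in \cov(\widetilde B)\setminus\cov(E[\overline{H}])}c_\ell x_\ell$ over $c^\tp z$ can be much larger than $|V^h\cap V[\overline{H}]|$; charging one unit per $V^h$-vertex does not absorb it.

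The paper's proof never leaves the decomposed instance and never makes $\overline{H}$ connected in $G$. It splits on the size of $|V^h\cap V[\overline{H}]|$. If $|V^h\cap V[\overline{H}]|\geq 6M/\eps^2$, it falls back on the $2$-integrality gap of the plain cut LP: since $z$ satisfies the cut constraints, $\OPT(E[\overline{H}])-c^\tp z\leq c^\tp z\leq 6M/\eps^2\leq|V^h\cap V[\overline{H}]|$, and the additive term absorbs the full slack. If instead $|V^h\cap V[\overline{H}]|<6M/\eps^2$, then $E[\overline{H}]$ \emph{itself} (not a supertree of it) is a $\gamma$-bundle in $G$: $\overline{H}$ has at most $10M/\eps^2+1$ leaves, hence decomposes into at most $2(10M/\eps^2+1)$ paths of $\overline{G}$, and each $V^h$-vertex in $V[\overline{H}]$ can break at most one such path into two pieces of $G$, giving at most $28M/\eps^2\leq\gamma$ paths in $G$. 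The bundle constraint for $E[\overline{H}]$ is then available to the decomposed solution and gives $\OPT(E[\overline{H}])\leq c^\tp z$ directly, with no additive term needed in that case. You will need some version of this two-case structure — your single-case argument cannot work, because there is no bound on how much LP mass your Steiner fill-in drags into $\cov(\widetilde B)$.
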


 \begin{proof}
  Let $e$ be the edge connecting $\overline{H}$ to $r$ in $H$.
  Since $(H,x)$ is $\frac{10M}{\eps^2}$-simple and $e \not\in E^h$, we know that $c^\tp z \leq \tfrac{10M}{\eps^2} + x(\cov(e)) M \leq \frac{12M}{\eps^2}$. Due to Lemma~\ref{lemma:decomposition}, property 3 we even have that $c^\tp z \leq \tfrac{4M}{\eps^2} + x(\cov(e)) M \leq \frac{6M}{\eps^2}$.
	
	Since $z$ fulfills the cut constraints, we know that $\OPT(E[\overline{H}]) - c^\tp z \leq c^\tp z \leq \frac{6M}{\eps^2}$. Thus, if $|V^h \cap V[\overline{H}]| \geq \frac{6M}{\eps^2}$, we are done. We therefore assume that $|V^h \cap V[\overline{H}]| < \frac{6M}{\eps^2}$.
	In this case, we want to employ the bundle constraints that $z$ fulfills, meaning that we have to analyze whether $\overline{H}$ is a $\ceil{\tfrac{28M}{\eps^2}}$-bundle in $G$.
	
	Due to $(H,x)$ being $\frac{10M}{\eps^2}$-simple, we know that the number of leaves of $\overline{H}$ is bounded by $\tfrac{10M}{\eps^2}+1$. We now split $\overline{H}$ at all nodes with degree at least 3. The number of nodes with degree at least 3 is bounded by the number of leaves. As a consequence, we obtain a decomposition into $Q_1, \dots, Q_t$ paths with $t \leq 2 \cdot \left(\tfrac{10M}{\eps^2}+1\right)$.
	
	Each $Q_i$, $i\in [t]$ is the union of paths in $G$, separated by components that were contracted into a node of $V^h$.  If a $Q_i$ is the union of $k$ paths, there need to be $k-1$ nodes in the interior of the path that by construction cannot be part of other $Q_i$, since $Q_i$ has no degree 3 or higher nodes in its interior. Thus, each node of $V^h$ is only capable of splitting a single path in $\overline{H}$ into two paths in $G$. Thus, we have at most $2 \cdot \left(\tfrac{10M}{\eps^2}+1\right)$ paths $Q_1, \dots, Q_t$ in $\overline{G}$ that are splitted into two paths in $G$ at most $\tfrac{6M}{\eps^2}$ times. Therefore, $\overline{H}$ is the union of at most $2 \cdot \left(\tfrac{10M}{\eps^2}+1\right) + \tfrac{6M}{\eps^2} \leq \tfrac{28M}{\eps^2}$ paths in $G$ and is therefore a $\ceil{\tfrac{28M}{\eps^2}}$-bundle in $G$.
	
	Thus, we have by the corresponding bundle-constraint:
	\[
   c^\tp z = \sum_{\ell \in \cov(\overline{H})}	c_\ell z_\ell \geq \sum_{\ell \in \cov(\overline{H})}	c_\ell x_\ell \geq \OPT(E[\overline{H}])
	\]
	which completes the proof of this claim.
 \end{proof}
 
 Claim~\ref{cl:compound} completes the proof, except for the run-time. However, we can use Lemma~\ref{lemma:constantleaves} to compute solutions for $\gamma$-bundles, and since $\OPT(E[\overline{H}]) \in O(M)$, we can use simple enumeration in the case of many contracted nodes.
 
\end{proof}

\end{document}